\documentclass[11pt, a4paper]{article}
\usepackage[utf8]{inputenc}
\usepackage{amsmath, amssymb, amsthm, mathtools}
\usepackage[margin=1in]{geometry}

\usepackage{microtype}
\usepackage{graphicx}
\usepackage{subcaption}
\usepackage{booktabs} % for professional tables

\usepackage[colorlinks=true, allcolors=blue]{hyperref}

\usepackage{amsmath}
\usepackage{amssymb}
\usepackage{mathtools}
\usepackage{amsthm}

\usepackage{csquotes} %\enquote{}
\usepackage{cleveref} %Cref{}
\usepackage{thmtools}
\usepackage{authblk} % Loads the author & affiliation package
\usepackage{natbib}

\theoremstyle{plain}
\newtheorem{theorem}{Theorem}[section]
\newtheorem{proposition}[theorem]{Proposition}
\newtheorem{lemma}[theorem]{Lemma}
\newtheorem{corollary}[theorem]{Corollary}
\theoremstyle{definition}
\newtheorem{definition}[theorem]{Definition}

\theoremstyle{remark}
\newtheorem{remark}[theorem]{Remark}

\newcommand{\mech}{\mathcal{M}}
\newcommand{\ie}{i.\@e.\@}

\usepackage{xcolor} %commented out because we have it above as well. 
\usepackage[normalem]{ulem}

\definecolor{cblue}{rgb}{0,0.5,0.8}
\definecolor{cred}{rgb}{0.968,0.145,0.521}

\newif\iffinalversion
\finalversiontrue  % <-- flip to \finalversiontrue when coauthors are done

\iffinalversion
  \newcommand{\old}[1]{}             % drop deleted text
  \newcommand{\new}[1]{#1}           % keep new text, no color
\else
  \newcommand{\old}[1]{\textcolor{cred}{\sout{#1}}}
  \newcommand{\new}[1]{\textcolor{cblue}{#1}}
\fi

\begin{document}

\title{\textbf{Optimal conversion from R\'enyi Differential Privacy to $f$-Differential Privacy}\thanks{Preprint. Under review.}}

%authors and affiliations
\author[1,2,3]{Anneliese Riess}
\author[4]{Juan Felipe Gomez}
\author[4]{Flavio du Pin Calmon}
\author[1,2,3,5]{Julia Anne Schnabel}
\author[6]{Georgios Kaissis}

\affil[1]{Helmholtz Munich}
\affil[2]{Technical University of Munich}
\affil[3]{Munich Center for Machine}   
\affil[4]{Harvard University}   
\affil[5]{King’s College London}
\affil[6]{Hasso-Plattner-Institut}

% \affil[1]{Institute of Machine Learning in Biomedical Imaging, Helmholtz Munich, Germany}
% \affil[2]{School of Computation, Information and Technology, Technical University of Munich, Germany}
% \affil[3]{Harvard University, USA}   
% \affil[4]{Institute for Computational Imaging and AI in Medicine, Technical University of Munich,Germany}
% \affil[5]{Hasso-Plattner-Institut, Potsdam, Germany}

\date{}
\maketitle

\begin{abstract}
    We prove the conjecture stated in Appendix F.3 of \citet{zhu2022optimalaccountingdifferentialprivacy}: among all conversion rules that map a R\'enyi Differential Privacy (RDP) profile $\tau \mapsto \rho(\tau)$ to a valid hypothesis-testing trade-off $f$, the rule based on the intersection of single-order RDP privacy regions is optimal. 
    This optimality holds simultaneously for all valid RDP profiles and for all Type I error levels $\alpha$.
    Concretely, we show that in the space of trade-off functions, the tightest possible bound is $f_{\rho(\cdot)}(\alpha) = \sup_{\tau \geq 0.5} f_{\tau,\rho(\tau)}(\alpha)$: the pointwise maximum of the single-order bounds for each RDP privacy region.
    Our proof unifies and sharpens the insights of \citet{balle2019hypothesistestinginterpretationsrenyi}, \citet{asoodeh2021variantsdifferentialprivacylossless}, and \citet{zhu2022optimalaccountingdifferentialprivacy}.
    Our analysis relies on a precise geometric characterization of the RDP privacy region, leveraging its convexity and the fact that its boundary is determined exclusively by Bernoulli mechanisms.
    Our results establish that the \enquote{intersection-of-RDP-privacy-regions} rule is not only valid, but optimal: no other black-box conversion can uniformly dominate it in the Blackwell sense, marking the fundamental limit of what can be inferred about a mechanism's privacy solely from its RDP guarantees.
\end{abstract}

\section{Introduction}
\label{sec:introduction}
The hypothesis testing interpretation of Differential Privacy (DP), often formalized as $f$-DP \citep{dong2019gaussiandifferentialprivacy}, has emerged as a rigorous standard for privacy analysis, grounding privacy guarantees in the operational framework of binary hypothesis testing.
By characterizing the trade-off between Type I and Type II errors that an adversary faces when distinguishing between adjacent datasets; $f$-DP provides a complete and geometrically interpretable picture of privacy loss.
% \old{However, despite the rise of numerical accounting tools, R\'enyi Differential Privacy (RDP) remains indispensable for many tasks due to its analytical tractability, particularly in complex settings such as graph learning \citep{xiang2025preservingnodelevelprivacygraph}.}
\new{However, despite the rise of numerical accounting tools, Rényi Differential Privacy (RDP) \citep{Mironov_2017} remains indispensable for many tasks due to its analytical tractability. 
For example, some individual privacy accounting techniques \citep{feldman2021individual} and hyperparameter transfer approaches \citep{papernot2022hyperparameter} rely entirely on RDP, making it often the only tractable accounting tool available for advanced algorithms. 
Beyond composition, RDP is equally central to private selection: the exponential mechanism is most tightly analyzed via zero-Concentrated Differential Privacy \citep{cesar2021bounding}, which maps directly to an RDP profile. 
This practical indispensability extends to complex domains such as graph learning \citep{xiang2025preservingnodelevelprivacygraph}.}

Unlike $f$-DP or Total Variation privacy \citep{kulynych2025unifyingreidentificationattributeinference}, which admit direct variational representations involving a single rejection region, the R\'enyi divergence does not admit a direct hypothesis-testing interpretation for general distributions \citep{balle2019hypothesistestinginterpretationsrenyi}.
% \old{To bridge the gap between the calculable moments of RDP and the interpretable error trade-offs of $f$-DP, one must rely on the \enquote{2-cut} reduction \citep{balle2019hypothesistestinginterpretationsrenyi} to analyze what a divergence constraint implies for the induced binary distributions of a hypothesis test.}
\new{To bridge the gap between the calculable moments of RDP and the interpretable 
error trade-offs of $f$-DP, one must rely on the \enquote{2-cut} reduction 
\citep{balle2019hypothesistestinginterpretationsrenyi} to analyze what a 
divergence constraint means in terms of binary hypothesis testing.}
This necessity drives the construction of the \textit{RDP privacy regions}: the set of all possible error pairs $(\alpha, \beta)$ compatible with a given RDP guarantee.

In this work, we address the problem of optimally converting an RDP profile into $f$-DP.
While prior works, such as \citet{asoodeh2021variantsdifferentialprivacylossless}, have solved the variational problem for a single R\'enyi order $\tau$, a mechanism typically satisfies a continuum of constraints defined by a profile $\tau \mapsto \rho(\tau)$ (also known as functional RDP \citep{wang2018subsampledrenyidifferentialprivacy}).
We prove the conjecture stated in Appendix F.3 of \citet{zhu2022optimalaccountingdifferentialprivacy}: among all conversion rules that map an RDP profile to a valid hypothesis-testing trade-off $f$, the lower boundary of the intersection of $\tau$-RDP privacy regions over all $\tau\in[0.5,\infty)$ is optimal.
Any tighter conversion rule would necessarily require information about the mechanism beyond its RDP profile.
We solve the functional optimization problem over this entire trajectory, seeking the tightest possible envelope that holds for \textit{all} mechanisms satisfying a given RDP profile.
Our proof unifies and sharpens the insights of \citet{balle2019hypothesistestinginterpretationsrenyi}, \citet{asoodeh2021variantsdifferentialprivacylossless}, and \citet{zhu2022optimalaccountingdifferentialprivacy}.

Our main contribution is establishing the fundamental limit of black-box privacy conversion, \new{where the conversion relies solely 
on the RDP profile $\rho(\cdot)$ and is oblivious to all other properties of the underlying privacy mechanism.} 
We prove that the trade-off function derived from the intersection of privacy regions for all R\'enyi orders is the pointwise optimal one. 
By constructing witness mechanisms, namely, specific instances of Randomized Response that exactly saturate this bound, we demonstrate that no tighter conversion rule can exist without inspecting other properties of the mechanism. 
This result elevates the proposed conversion from a technical improvement to a definitive conclusion for RDP-to-DP conversion research: we have reached the theoretical ceiling of what can be inferred solely from RDP parameters.

% \new{
% \paragraph{Conflict of Interest Disclosure} ....
% }

\section{Preliminaries}
\label{sec:preliminaries}
We assume familiarity with the fundamental concepts of differential privacy; however, to ensure this work is self-contained, we review the necessary definitions and notation in this section.

We denote by $\mathbb{D}$ the universe of all possible datasets. 
Let $(\mathcal{Y}, \Sigma)$ be a measurable space, and let $\mathcal{P}(\mathcal{Y})$ denote the set of all probability measures on $\mathcal{Y}$.

\subsection{Privacy Definitions via Divergences}

A randomized mechanism $\mech:\mathbb{D}\to \mathcal{P}(\mathcal{Y})$ maps a dataset $\mathcal{D}$ to a probability distribution $\mech(\mathcal{D})$. 
We define the adjacency relation $\mathcal{D} \sim \mathcal{D}'$ for datasets differing by a single record.

\begin{definition}[$(\varepsilon,\delta)$-Differential Privacy] \label{def::approxdp}
    A mechanism $\mech$ satisfies $(\varepsilon,\delta)$-DP if and only if for all $\mathcal{D} \sim \mathcal{D}'$:
    \begin{equation} \label{eq::approxdpasdivergence}
        E_{e^\varepsilon}(\mech(\mathcal{D}) \| \mech(\mathcal{D}')) \leq \delta,
    \end{equation}
    where $E_\gamma(P\|Q) := \int_{\mathcal{Y}} \max(0, p(y) - \gamma q(y)) d\mu(y)$ is the hockey-stick divergence, with $p$ and $q$ denoting the densities of $P$ and $Q$ with respect to a dominating measure $\mu$.
    \new{We use $\gamma = e^\varepsilon$ throughout.}
\end{definition}

\begin{definition}[R\'{e}nyi Differential Privacy \citep{Mironov_2017}]
    A mechanism $\mech$ satisfies $(\tau,\rho)$-RDP if and only if for all $\mathcal{D} \sim \mathcal{D}'$:
    \begin{equation} \label{eq::renyidpasfdivergence}
        D_\tau(\mech(\mathcal{D}) \| \mech(\mathcal{D}')) \leq \rho,
    \end{equation}
    where $D_\tau(P\|Q) := \frac{1}{\tau-1} \log \mathbb{E}_{y \sim Q} [(p(y)/q(y))^\tau]$ \old{is the R\'{e}nyi divergence of order $\tau > 1$.}
    \new{is the R\'{e}nyi divergence of order $\tau \geq 1$, with $\tau = 1$ defined by continuous extension as $D_1(P\|Q) := D_{\mathrm{KL}}(P\|Q)$.}
\end{definition}

\old{The function $\rho : (1,\infty) \to [0,\infty]$ satisfying $D_\tau(\mech(\mathcal{D})\|\mech(\mathcal{D}')) \leq \rho(\tau)$ for all adjacent $\mathcal{D}\sim \mathcal{D}'$ and all $\tau>1$ is called the \textit{RDP profile}. 
In such a case, we say $\mech$ satisfies $(\tau,\rho(\tau))$-RDP.}

\new{A mechanism typically satisfies a family of RDP guarantees rather than a single one. 
The function $\rho(\cdot)$ for which $D_\tau(\mech(\mathcal{D}) \| \mech(\mathcal{D}')) \leq \rho(\tau)$ holds for all $\mathcal{D} \sim \mathcal{D}'$ and all $\tau$ is called the \textit{RDP profile} of $\mech$.}

\textbf{Extended Domain.}
While the standard definition of RDP focuses on orders $\tau > 1$ \citep{Mironov_2017}, \citet{balle2019hypothesistestinginterpretationsrenyi} demonstrated that the constraints corresponding to $\tau \in (0,1)$ are essential for a tight geometric characterization of the privacy region: \new{omitting them yields strictly suboptimal conversions, including for the Gaussian mechanism (see Figure~\ref{fig:rdp_to_fdp_gaussian}). }
% \old{Although the privacy parameter $\rho$ is not typically accounted for in these regimes, the underlying R\'enyi divergence constraints $D_\tau(P\|Q)$ are well-defined for all $\tau \in \mathbb{R}$ (assuming $P$ and $Q$ have common support).
% Consequently, we implicitly view the profile $\rho(\cdot)$ as being defined over the extended domain.
% However, due to the symmetry of the adjacency relation (implying bounds on $D_\tau(P\|Q)$ equate to bounds on $D_{1-\tau}(Q\|P)$), we restrict our attention to $\tau \in (0.5, \infty)$ \citep{van_Erven_2014}.}
\new{Because $D_\tau(P\|Q)$ is well-defined for all $\tau \in (0,\infty)$ (assuming $P$ and $Q$ have common support), we treat RDP profiles as functions on the extended domain $[0.5, \infty)$. 
The symmetry $D_\tau(P\|Q) = \frac{\tau}{1-\tau} D_{1-\tau}(Q\|P)$ for $\tau \in (0,1)$ \citep{van_Erven_2014} implies that constraints for $\tau \in (0, 0.5)$ are redundant given constraints for $\tau \in [0.5, 1)$; we therefore restrict attention to $\tau \in [0.5, \infty)$ throughout. 
A formal justification is given in Appendix~\ref{sec:derivation_constraints}.}

\old{Throughout this manuscript, we do not specify the functional form of $\rho(\tau)$. 
However, we assume that $\rho(\cdot)$ is a valid RDP profile, meaning that the set of mechanisms satisfying the $(\tau, \rho(\tau))$-RDP constraints is non-empty.}

\new{R\'enyi divergence is non-decreasing in $\tau$, therefore $\rho(\cdot)$ is non-decreasing as well. 
Any profile $\rho(\cdot)$ is automatically \textit{valid}, i.e., the class of compatible mechanisms is non-empty, since the perfectly-private mechanism, with divergence and profile identically zero, belongs to every such class.
However, for completeness, we note that a profile is \textit{achievable}, i.e., equal to the exact RDP of some mechanism rather than merely an upper bound, if and only if $(\tau-1)\rho(\tau)$ is convex in $\tau$, the cumulant generating function characterization of the privacy loss random variable \citep{wang2018subsampledrenyidifferentialprivacy, balle2019hypothesistestinginterpretationsrenyi}.
Our results hold for all valid profiles.}
%%%%%%%%%%%%%%
\subsection{Hypothesis Testing}

The operational interpretation of DP is best understood through binary hypothesis testing. 
Consider the task of distinguishing between two adjacent datasets, $\mathcal{D}$ and $\mathcal{D}'$, based on the output $Y\in \mathcal{Y}$ of a mechanism $\mathcal{M}$.
This task corresponds to testing the null hypothesis $H_0: Y \sim P$ against the alternative $H_1: Y \sim Q$, where $P = \mech(\mathcal{D})$ and $Q = \mech(\mathcal{D}')$.
A decision rule is defined by a rejection region $S \subseteq \mathcal{Y}$ (where we reject $H_0$ if $Y \in S$). This induces two types of errors:
\begin{align*}
    \text{Type I Error:} \quad & P(S) = \mathbb{P}_{Y \sim P}(Y \in S), \\
    \text{Type II Error:} \quad & Q(S^c) = \mathbb{P}_{Y \sim Q}(Y \notin S).
\end{align*}
The difficulty of distinguishing $P$ from $Q$ is fully characterized by the trade-off function $T(P,Q)$, which maps a Type I error level $\alpha \in [0,1]$ to the minimal possible Type II error:
\begin{equation} \label{eq::tradeoffnonincreasing}
    T(P, Q)(\alpha) := \inf_{S \in \Sigma} \{ Q(S^c) : P(S) \leq \alpha \}.
\end{equation}
Let $f$ be a convex, non-increasing function on the unit square. 
Then, $f$-DP \citep{dong2019gaussiandifferentialprivacy} ensures that for \textit{any} adjacent datasets $\mathcal{D} \sim \mathcal{D}'$, with $P=\mech(\mathcal{D})$ and $Q=\mech(\mathcal{D}')$:
\begin{equation*}
    T(P,Q)(\alpha) \geq f(\alpha), \quad \forall \alpha \in [0,1].
\end{equation*}
This effectively limits the power of any adversary to identify the source dataset.

Lastly, we say that two DP mechanisms are Blackwell equivalent if and only if their trade-off functions coincide everywhere \citep{kaissis2025calibrationpointmechanismcomparison}.
Moreover, we say a mechanism with trade-off function $f$ dominates another with trade-off function $g$ in the Blackwell sense if and only if $f(\alpha) \geq g(\alpha)$ for all $\alpha \in [0,1]$.

%%%%%%%%%%%%%%

\subsection{Conversion Rules and RDP Classes}

In many settings, such as private deep learning, the specific mechanism $\mech$ and datasets $\mathcal{D}, \mathcal{D}'$ are unknown or effectively black-box; only the privacy accountant's output, i.e. the RDP profile $\rho(\tau)$, is available. 
% \old{Therefore, we must bound the trade-off function for the entire class of distributions permissible under $\rho$.}
\new{Therefore, since we cannot inspect the mechanism directly, finding a suitable $f$-DP guarantee requires characterizing the trade-off function over the entire class of distributions permissible under $\rho$.}

Let $\mathcal{S}_\rho$ be the set of all pairs of probability distributions $(P, Q)$ that satisfy the RDP constraints:
\[
    \mathcal{S}_\rho := \left\{ (P, Q) \in \mathcal{P}(\mathcal{Y})^2
    \;\bigg|\;
    \begin{aligned}
        D_\tau(P \| Q) &\leq \rho(\tau)\\
        D_\tau(Q \| P) &\leq \rho(\tau)
    \end{aligned}
     \; \forall \tau \geq 0.5 \right\}.
\]

\begin{definition}[Admissible Conversion Rule]
    Let $\Pi$ be the space of valid RDP profiles and $\mathcal{T}$ the space of trade-off functions.
    A transformation $C: \Pi \to \mathcal{T}$ is an \textit{admissible conversion rule} if for any profile $\rho \in \Pi$, it lower-bounds the trade-off of every pair in $\mathcal{S}_\rho$:
    \begin{equation} \label{eq:admissible_conversion}
        C(\rho)(\alpha) \leq \inf_{(P, Q) \in \mathcal{S}_\rho} T(P, Q)(\alpha), \quad \forall \alpha \in [0,1].
    \end{equation}
\end{definition}

This definition ensures that $C(\rho)$ is a valid lower bound on the hypothesis testing difficulty.
In the simplified case where we possess only a single point-wise guarantee $(\tau^*, \epsilon^*) \in [0.5, \infty) \times \mathbb{R}_{\geq 0}$ rather than a full functional profile, we define the conversion rule analogously by considering the profile $\rho$ where $\rho(\tau^*) = \epsilon^*$ and $\rho(\tau) = \infty$ for all $\tau \neq \tau^*$.

%%%%%%%%%%%%%%%

\subsection{Privacy Region}

We define the \textit{privacy region} of a mechanism as the set of all attainable error pairs $(\alpha, \beta)$ for the binary hypothesis testing problem.
\citet{wasserman2009statisticalframeworkdifferentialprivacy} demonstrated that a mechanism is $(\varepsilon, \delta)$-DP if and only if all attainable error pairs lie within the region $R_{\text{DP}}(\varepsilon, \delta)$:
\begin{equation} \label{eq::defprivacyregion}
    R_{\text{DP}}(\varepsilon, \delta) = \left\{ (\alpha, \beta) \in [0,1]^2 \;\bigg|\; 
    \begin{aligned}
        1 - \alpha &\leq e^\varepsilon \beta + \delta \\
        1 - \beta &\leq e^\varepsilon \alpha + \delta
    \end{aligned} 
    \right\}.
\end{equation}

The lower boundary of this region is the \textit{tightest lower-bounding} piecewise linear trade-off function $f_{\varepsilon, \delta}: [0,1] \to [0,1]$, given by:
\begin{equation*}
    f_{\varepsilon, \delta}(\alpha) = \max\left\{0, 1-\delta - e^\varepsilon \alpha, e^{-\varepsilon}(1-\delta-\alpha)\right\}.
\end{equation*}

The set $R_{\text{DP}}(\varepsilon, \delta)$ represents the collection of permissible error rates. 
For a mechanism to satisfy $(\varepsilon,\delta)$-DP, no adversary can construct a test $S$ with error rates $(\alpha, \beta)$ falling outside this region (i.e., closer to the origin $(0,0)$ than the boundary allows).

Using the $(\varepsilon, \delta)$-DP definition via the hockey-stick divergence (see Definition~\ref{def::approxdp}), we can extend the definition of privacy regions to classes of distributions.
Consider the set of all distribution pairs consistent with the privacy parameters:
\[
    \mathcal{S}_{\varepsilon,\delta} := \left\{ (P, Q) \in \mathcal{P}(\mathcal{Y})^2 \;\bigg|\; 
        \begin{aligned}
        E_{e^\varepsilon}(P \| Q) \leq \delta\\
        E_{e^\varepsilon}(Q \| P) \leq \delta 
    \end{aligned} \right\}  .
\]
The privacy region $R_{\text{DP}}(\varepsilon, \delta)$ is precisely the set of all error pairs $(\alpha, \beta)$ attainable by any binary hypothesis test trying to distinguish between any pair $P$ and $Q$ in $\mathcal{S}_{\varepsilon,\delta}$.

%%%%%%%%%%%%%%%%%%%%%%%%
%We emphasize the Rényi divergence is \(\infty\)-generated, so it does not admit a direct hypothesis-testing interpretation unless we take the 2-cut.

\subsubsection{The RDP Privacy Region and 2-Cuts}

% \old{Privacy regions can also be constructed using divergences such as the R\'enyi divergence \citep{balle2019hypothesistestinginterpretationsrenyi}. 
% However, to do so, the concept of $k$-cuts is required, specifically the 2-cut, which relates high-dimensional distributions to binary hypothesis testing.
% Unlike the Total Variation distance, the R\'enyi divergence lacks a direct variational representation involving a single rejection region. 
% To bridge this gap, we use the 2-cut reduction.
% Intuitively, the 2-cut reduction projects the distinguishability (measured by some divergence) of complex high-dimensional distributions onto the simplest possible space, namely binary outcomes, thereby capturing the worst-case discrimination capability of an adversary.}

\new{Privacy regions can also be constructed using divergences such as the R\'enyi divergence \citep{balle2019hypothesistestinginterpretationsrenyi}. 
However, this requires an additional step employing the concept of $k$-cuts \citep{balle2019hypothesistestinginterpretationsrenyi}, specifically the 2-cut, which relates high-dimensional distributions to binary hypothesis testing.
Unlike the Total Variation distance, the R\'enyi divergence lacks a direct variational representation involving a single rejection region, and therefore cannot be translated into error trade-offs without first projecting onto binary outcomes.
Intuitively, the 2-cut reduction projects the distinguishability (measured by some divergence) of complex high-dimensional distributions onto this reduced space, where privacy loss can be expressed directly in terms of Type I and Type II errors.}

Consider a mechanism satisfying $(\tau, \rho)$-RDP. 
For any decision rule defined by rejection region $S\subseteq \mathcal{Y}$, we can define a randomized binary test with Type I error $\alpha = P(S)$ and Type II error $\beta = Q(S^c)$. 
These errors induce two Bernoulli distributions:
\begin{equation}
    \mathcal{B}_P \sim \text{Bern}(\alpha) \quad \text{and} \quad \mathcal{B}_Q \sim \text{Bern}(1-\beta).
\end{equation}
The data processing inequality (DPI) guarantees that post-processing (in this case, mapping the mechanism's output to a binary decision) cannot increase the divergence. 
The 2-cut of the R\'enyi divergence, denoted $\bar{D}_\tau^2(P\|Q)$, can be interpreted as the worst-case divergence achievable by any such binary reduction:
\[
    \bar{D}_\tau^2(P\|Q) := \sup_{S \subseteq \mathcal{Y}} D_\tau(\mathcal{B}_P \| \mathcal{B}_Q).
\]
By the DPI, this quantity is upper-bounded by the divergence of the original distributions, which is in turn bounded by the privacy budget $\rho$:
\begin{equation}
    \label{eq::renyi_chain}
    D_\tau(\mathcal{B}_P \| \mathcal{B}_Q) \leq \bar{D}_\tau^2(P\|Q) \leq D_\tau(P\|Q) \leq \rho.
\end{equation}
% \old{Consequently, the error rates $(\alpha, \beta) \in [0,1]^2$ of \textit{every} possible binary test must satisfy $D_\tau(\mathcal{B}_P \| \mathcal{B}_Q) \leq \rho$.}

% \old{We formally define the $\tau$-order RDP privacy region $R_{D_\tau}(\rho)$ as the set of all error pairs $(\alpha, \beta)$ attainable by any binary hypothesis test trying to distinguish between any pair $P$ and $Q$ that satisfy $D_\tau(P\|Q) \leq \rho$:
% }
% \begin{equation} \label{eq::defrenyiprivacyregion}
% \begin{split}
%     &R_{D_\tau}(\rho) =\\
%     &\left\{ (\alpha, \beta) \in [0,1]^2 \;\bigg|\; 
%     \begin{aligned}
%         D_\tau(\text{Bern}(\alpha) \| \text{Bern}(1-\beta)) &\leq \rho \\
%         D_\tau(\text{Bern}(1-\beta) \| \text{Bern}(\alpha)) &\leq \rho
%     \end{aligned} 
%     \right\}.
% \end{split}
% \end{equation}

\new{
Since \eqref{eq::renyi_chain} holds for every test $S$, the 2-cut 
provides a necessary condition on all attainable error pairs: any 
$(\alpha, \beta)$ achievable by a $(\tau,\rho)$-RDP mechanism must 
satisfy $D_\tau(\mathcal{B}_P \| \mathcal{B}_Q) \leq \rho$. 
We define the $\tau$-order RDP privacy region, denoted by $R_{D_\tau}(\rho)$, as the set of all error pairs 
$(\alpha, \beta)\in [0,1]^2$ attainable by some mechanism satisfying $(\tau,\rho)$-RDP 
and some binary test. Formally, $(\alpha,\beta) \in R_{D_\tau}(\rho)$ if and 
only if there exist distributions $P, Q$ and a test $S \subseteq \mathcal{Y}$ such that: $D_\tau(P\|Q) \leq \rho$ and $\quad D_\tau(Q\|P) \leq \rho$, with $P(S) \leq \alpha$ and $Q(S^c) \leq \beta$.
}

\begin{proposition}[Bernoulli Characterization of the RDP Privacy Region]
\label{prop::BernoulliCharacterization}
The $\tau$-order RDP privacy region admits the explicit characterization:
\begin{equation} \label{eq::defrenyiprivacyregion}
\begin{split}
    &R_{D_\tau}(\rho) =
    \left\{ (\alpha, \beta) \in [0,1]^2 \;\bigg|\; 
    \begin{aligned}
        D_\tau(\mathrm{Bern}(\alpha) \| \mathrm{Bern}(1-\beta)) &\leq \rho \\
        D_\tau(\mathrm{Bern}(1-\beta) \| \mathrm{Bern}(\alpha)) &\leq \rho
    \end{aligned} 
    \right\}.
\end{split}
\end{equation}
\end{proposition}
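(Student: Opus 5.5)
The identity follows from two inclusions. The inclusion ``$\subseteq$'' comes from the data processing inequality \eqref{eq::renyi_chain} applied to the binary reduction. The inclusion ``$\supseteq$'' comes from exhibiting a Bernoulli witness mechanism. I will first run the argument with \emph{attained} errors, $P(S)=\alpha$ and $Q(S^c)=\beta$, matching the sentence ``the set of all error pairs attainable''. I allow the test to be randomized, so $S$ may be replaced by a $[0,1]$-valued test function. This extension does not change the argument, since a randomized test is still a Markov kernel to $\{0,1\}$. At the end I will reconcile this reading with the slack ``$\le$'' in the formal definition.

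\textbf{Step 1 ($\subseteq$).} Take $(\alpha,\beta)$ realised by $(P,Q)$ with $D_\tau(P\|Q)\le\rho$ and $D_\tau(Q\|P)\le\rho$, and a test $S$. The map $y\mapsto \mathbf 1\{y\in S\}$ pushes $P$ forward to $\mathrm{Bern}(P(S))=\mathrm{Bern}(\alpha)$ and $Q$ forward to $\mathrm{Bern}(Q(S))=\mathrm{Bern}(1-\beta)$. By \eqref{eq::renyi_chain}, which is the DPI for $D_\tau$ and is valid for all $\tau\ge 0.5$ (including $\tau\in[0.5,1)$ and $\tau=1$ as KL), we get $D_\tau(\mathrm{Bern}(\alpha)\|\mathrm{Bern}(1-\beta))\le D_\tau(P\|Q)\le\rho$. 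Applying the same kernel to the reversed pair gives the second constraint.

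\textbf{Step 2 ($\supseteq$).} Conversely, suppose $(\alpha,\beta)\in[0,1]^2$ satisfies both Bernoulli constraints. Take $\mathcal Y=\{0,1\}$, $P=\mathrm{Bern}(\alpha)$, $Q=\mathrm{Bern}(1-\beta)$ and $S=\{1\}$. Then $P(S)=\alpha$ and $Q(S^c)=\beta$, and the divergence conditions on $(P,Q)$ are exactly the assumed ones. Hence $(\alpha,\beta)$ is attained by a $(\tau,\rho)$-RDP pair.

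I need to treat the boundary cases $\alpha,\beta\in\{0,1\}$ with the conventions $0^\tau\cdot\infty$ etc. If $\mathrm{Bern}(\alpha)\not\ll \mathrm{Bern}(1-\beta)$, then for $\tau\ge1$ the divergence is $+\infty$ and both sides exclude the point. For $\tau\in[0.5,1)$ the formula stays finite, and both steps go through verbatim because the pair $(P,Q)$ is literally the same Bernoulli pair.

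\textbf{Step 3 (reconciling the slack ``$\le$'').} This is the step I expect to be the main obstacle. The slack definition makes the region upward closed, whereas the Bernoulli set is not upward closed above the anti-diagonal. For example, $(1,1)$ is attained with slack by $P=Q$, yet $D_\tau(\mathrm{Bern}(1)\|\mathrm{Bern}(0))=\infty$ for $\tau\ge1$.

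My plan is to prove monotonicity by differentiating $(\tau-1)^{-1}\log\big(\alpha^\tau(1-\beta)^{1-\tau}+(1-\alpha)^\tau\beta^{1-\tau}\big)$, or more simply by viewing the Bernoulli divergence as jointly quasi-convex and vanishing on $\alpha+\beta=1$. This shows that on $\{\alpha+\beta\le 1\}$ both Bernoulli divergences are non-increasing in $\alpha$ and in $\beta$. Consequently, on $\{\alpha+\beta\le1\}$, slack attainment and exact attainment coincide. Above the anti-diagonal, the exact-attainment reading is the one under which the displayed identity holds verbatim.

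So I will present the proposition as the equality for exactly attained error pairs, proved by Steps 1--2 on the whole square. I will then add the corollary that its lower boundary, which is the only part used later for trade-off functions, agrees with the slack definition.
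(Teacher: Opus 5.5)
Your Steps 1--2 are the paper's proof. For ($\subseteq$) you push $(P,Q)$ through the indicator of $S$ and apply the DPI chain \eqref{eq::renyi_chain}. For ($\supseteq$) you take the Bernoulli pair itself with the test $S=\{1\}$ as witness. Allowing randomized tests changes nothing.

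Where you go further is Step 3, and there you are more careful than the paper. The formal definition of $R_{D_\tau}(\rho)$ only asks for $P(S)\le\alpha$ and $Q(S^c)\le\beta$, but the paper's ($\subseteq$) step silently treats these as equalities. Under the literal slack definition the region is upward closed, so the displayed identity is false above the anti-diagonal. Your point $(1,1)$ is a valid counterexample, in fact at every order, since $\mathrm{Bern}(1)$ and $\mathrm{Bern}(0)$ are mutually singular.

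Your repair is sound. Joint quasi-convexity, together with vanishing on $\alpha+\beta=1$, makes each Bernoulli divergence non-increasing in $\alpha$ and in $\beta$ on $\{\alpha+\beta\le 1\}$. Hence the slack and exact readings agree there. Since $(\alpha,1-\alpha)$ belongs to both sets, the lower boundary $f_{\tau,\rho}$, which is all that is used downstream, is the same under either reading.

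The paper's version buys brevity at the price of a definition/proof mismatch. Yours proves the identity for exactly attained pairs and the boundary statement for the slack definition, which is exactly what is true and what is needed.

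One minor slip: for $\tau\in[0.5,1)$ the Bernoulli divergence is not always finite. Mutually singular pairs such as $\alpha=\beta=0$ give $+\infty$ at every order. Steps 1--2 never rely on finiteness, though, so nothing breaks.
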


\begin{proof}
We prove equality by double inclusion.

\textbf{($\subseteq$)} Let $(\alpha,\beta) \in R_{D_\tau}(\rho)$. 
Then by definition there exist distributions $P, Q$ with $D_\tau(P\|Q) \leq \rho$, $D_\tau(Q\|P) \leq \rho$, and a test $S$ with $P(S) = \alpha$ and $Q(S^c) = \beta$. 
The test $S$ induces Bernoulli distributions 
$\mathcal{B}_P \sim \text{Bern}(\alpha)$ and $\mathcal{B}_Q \sim 
\text{Bern}(1-\beta)$. 
By \eqref{eq::renyi_chain}:
  $D_\tau(\text{Bern}(\alpha)\|\text{Bern}(1-\beta)) \leq D_\tau(P\|Q) \leq \rho$,
and analogously $D_\tau(\text{Bern}(1-\beta)\|\text{Bern}(\alpha)) \leq \rho$.

\textbf{($\supseteq$)} Suppose $(\alpha,\beta)$ satisfies both Bernoulli constraints. 
Set $P = \text{Bern}(\alpha)$ and $Q = \text{Bern}(1-\beta)$. 
By assumption, $D_\tau(P\|Q) \leq \rho$ and $D_\tau(Q\|P) \leq \rho$, so $(P,Q)$ is a valid $(\tau,\rho)$-RDP mechanism. 
The identity test $S = \{1\}$, which rejects $H_0$ upon observing output $1$, achieves $P(S) = \alpha$ and $Q(S^c) = \beta$ exactly, so $(\alpha,\beta) \in R_{D_\tau}(\rho)$.
\end{proof}

% \new{
% The chain of inequalities \eqref{eq::renyi_chain} shows that every 
% $(\alpha,\beta) \in R_{D_\tau}(\rho)$ must satisfy the Bernoulli divergence constraints $D_\tau(\text{Bern}(\alpha)\|\text{Bern}(1-\beta)) \leq \rho$ and $D_\tau(\text{Bern}(1-\beta)\|\text{Bern}(\alpha)) \leq \rho$.
% In Proposition~\ref{prop::OptimalitysingleRDPBoundary}, we establish the converse: every pair $(\alpha, \beta)$ satisfying these Bernoulli constraints is attained by some valid $(\tau,\rho)$-RDP mechanism, namely a Bernoulli pair itself. 
% The two sets therefore coincide, yielding the explicit characterization of $R_{D_\tau}(\rho)$ given by the following cases.}

Depending on the value of $\tau$, the constraints in \eqref{eq::defrenyiprivacyregion} take the following explicit forms derived from \eqref{eq::renyi_chain}:

\noindent \textbf{Case 1 ($\tau > 1$):}
For $\tau > 1$, the region is the set of $(\alpha, \beta)$ satisfying:
\begin{equation}
    \label{eq::renyidivergenceineqone}
    \begin{aligned}
        \alpha^\tau (1-\beta)^{1-\tau} + (1-\alpha)^\tau \beta^{1-\tau} &\leq e^{(\tau-1)\rho} \\
        (1-\beta)^\tau \alpha^{1-\tau} + \beta^\tau (1-\alpha)^{1-\tau} &\leq e^{(\tau-1)\rho}.
    \end{aligned}
\end{equation}

\noindent \textbf{Case 2 ($\tau = 1$\new{)}:} 
As $\tau \to 1$, the RDP constraint converges to the KL-divergence:
\begin{equation}
    \label{eq::renyidivergenceineqtwo}
    \begin{aligned}
        \alpha \log \frac{\alpha}{1-\beta} + (1-\alpha) \log \frac{1-\alpha}{\beta} &\leq \rho \\
        (1-\beta) \log \frac{1-\beta}{\alpha} + \beta \log \frac{\beta}{1-\alpha} &\leq \rho.
    \end{aligned}  
\end{equation}

\noindent \textbf{Case 3 ($0 < \tau < 1$):} 
For $\tau < 1$, the definition of R\'enyi divergence involves a factor of $\frac{1}{\tau-1} < 0$, which reverses the inequality direction:
\begin{equation}
    \label{eq::renyidivergenceineqthree}
    \begin{aligned}
        \alpha^\tau (1-\beta)^{1-\tau} + (1-\alpha)^\tau \beta^{1-\tau} &\geq e^{(\tau-1)\rho} \\
        (1-\beta)^\tau \alpha^{1-\tau} + \beta^\tau (1-\alpha)^{1-\tau} &\geq e^{(\tau-1)\rho}.
    \end{aligned}
\end{equation}

The constraints \eqref{eq::renyidivergenceineqone}--\eqref{eq::renyidivergenceineqthree} have already been derived in \citet{zhu2022optimalaccountingdifferentialprivacy}; however, for the self-containedness of this manuscript, we have included their derivation in Appendix~\ref{sec:derivation_constraints}.
These inequalities define the $\tau$-order RDP privacy region $R_{D_\tau}(\rho)$ for $\tau > 1$, $\tau = 1$, and $0 < \tau < 1$, respectively.
Note that the $\tau$-order RDP privacy regions for $\tau' \in [0.5, 1)$, are sufficient to characterize the $\tau$-order RDP privacy regions for all $0 < \tau < 1$ (see Apendix \ref{sec:derivation_constraints}).

%%%%%%%%%%%%%

\subsubsection{Properties of the RDP Privacy Region}

\begin{proposition}[Convexity and Symmetry of the RDP Privacy Region] \label{prop:convexity}
    For any $\tau \in [0.5, \infty)$ and $\rho \geq 0$, the RDP privacy region $R_{D_\tau}(\rho)$ is a convex set and is symmetric about $\alpha=\beta$.
\end{proposition}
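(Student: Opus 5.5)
Symmetry is immediate from Proposition~\ref{prop::BernoulliCharacterization}, and convexity follows from joint convexity of an $f$-divergence functional; we treat them in that order.

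\textbf{Symmetry.} Consider the swap $(\alpha,\beta)\mapsto(\beta,\alpha)$. Relabeling outcomes $0\leftrightarrow 1$ is a bijection, so it preserves R\'enyi divergence. Hence
\[
D_\tau(\mathrm{Bern}(\beta)\|\mathrm{Bern}(1-\alpha)) = D_\tau(\mathrm{Bern}(1-\beta)\|\mathrm{Bern}(\alpha)).
\]
Likewise, $D_\tau(\mathrm{Bern}(1-\alpha)\|\mathrm{Bern}(\beta)) = D_\tau(\mathrm{Bern}(\alpha)\|\mathrm{Bern}(1-\beta))$. So the swap exchanges the two defining constraints, and the region is invariant under reflection about $\alpha=\beta$.

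\textbf{Convexity.} It suffices to show that each constraint set $\{(\alpha,\beta): D_\tau(\mathrm{Bern}(\alpha)\|\mathrm{Bern}(1-\beta))\le\rho\}$ is convex; the region is then an intersection of two convex sets. The map $(\alpha,\beta)\mapsto(\mathrm{Bern}(\alpha),\mathrm{Bern}(1-\beta))$ is affine. The R\'enyi divergence is a monotone transform of the Hellinger-type functional
\[
H_\tau(P\|Q)=\textstyle\sum_i p_i^\tau q_i^{1-\tau}.
\]
For $\tau>1$, the map $(p,q)\mapsto p^\tau q^{1-\tau}$ is the perspective of the convex function $t\mapsto t^\tau$, hence jointly convex. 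The constraint $H_\tau\le e^{(\tau-1)\rho}$ is therefore a sublevel set of a convex function of $(\alpha,\beta)$, and so convex. For $\tau\in[0.5,1)$, $p^\tau q^{1-\tau}$ is the perspective of the concave function $t^\tau$, hence jointly concave. The constraint $H_\tau\ge e^{(\tau-1)\rho}$ in \eqref{eq::renyidivergenceineqthree} is then a superlevel set of a concave function, which is again convex. For $\tau=1$, KL divergence is jointly convex, since $p\log(p/q)$ is the perspective of $t\log t$. Its sublevel sets in \eqref{eq::renyidivergenceineqtwo} are convex.

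\textbf{Boundary conventions.} The main obstacle is making this rigorous where the expressions degenerate. For $\tau>1$, terms like $\alpha^\tau(1-\beta)^{1-\tau}$ blow up as $\beta\to 1$. Here I would use the perspective's lower-semicontinuous extension ($0$ at $(0,0)$ and $+\infty$ at $(p,0)$ with $p>0$), which keeps joint convexity as an extended-real convex function. For $\tau<1$, the functional is finite and continuous on the whole square, so no issue arises. I would also check that these extended-value conventions agree with the divergence values used in Proposition~\ref{prop::BernoulliCharacterization}.
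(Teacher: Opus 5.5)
Your proposal is correct and follows the paper's proof: symmetry comes from the outcome-relabeling identity $D_\tau(\mathrm{Bern}(x)\|\mathrm{Bern}(y)) = D_\tau(\mathrm{Bern}(1-x)\|\mathrm{Bern}(1-y))$, and convexity comes from pulling convex sublevel sets back through the affine map $(\alpha,\beta)\mapsto(\mathrm{Bern}(\alpha),\mathrm{Bern}(1-\beta))$ and intersecting the two constraint sets. The only difference is that the paper cites joint quasi-convexity of R\'enyi divergence from van Erven and Harremo\"es, whereas you derive it from joint convexity or concavity of the perspective functions $p^\tau q^{1-\tau}$ (and $p\log(p/q)$ at $\tau=1$); this makes the extended-value boundary conventions explicit and shows that the swap exchanges the two constraints, which is more precise than saying it merely preserves the set.
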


\begin{proof}
    It is an established property that the R\'enyi divergence $D_\tau(P\|Q)$ is jointly quasi-convex in the pair of distributions $(P, Q)$ for all $\tau \in (0, \infty)$ \citep{van_Erven_2014}. 
    \old{Consequently, its sublevel sets are convex in the space of \textit{probability distributions}.}
    \new{Consequently, its sublevel sets are convex in the space of probability distributions.}
    
    The privacy region $R_{D_\tau}(\rho)$ is defined as the intersection of two sets:
    \[ S_1 = \{ (\alpha, \beta) : D_\tau(\text{Bern}(\alpha) \| \text{Bern}(1-\beta)) \leq \rho \}, \]
    \[ S_2 = \{ (\alpha, \beta) : D_\tau(\text{Bern}(1-\beta) \| \text{Bern}(\alpha)) \leq \rho \}. \]
    
    Consider the mapping $M: [0,1]^2 \to \mathcal{P}(\{0,1\})^2$ defined by $M(\alpha, \beta) = (\text{Bern}(\alpha), \text{Bern}(1-\beta))$. 
    Identifying the probability measures with vectors in $\mathbb{R}^2$, we have $\text{Bern}(\alpha) = [1-\alpha, \alpha]^\top$ and $\text{Bern}(1-\beta) = [\beta, 1-\beta]^\top$. 
    \old{This mapping is clearly affine with respect to the parameters $\alpha$ and $\beta$.}
    \new{The mapping $M$ is affine with respect to the parameters $\alpha$ and $\beta$.}
    
    Since the mapping from parameters to distributions is affine, the convexity of the sublevel sets in distribution space is preserved when \enquote{pulled back} to the parameter space. Thus, both $S_1$ and $S_2$ are convex sets.
    
    Finally, the privacy region $R_{D_\tau}(\rho) = S_1 \cap S_2$ is the intersection of two convex sets and is therefore convex.

    To show symmetry, note that if $(\alpha, \beta) \in R_{D_\tau}(\rho)$, then by definition $D_\tau(\text{Bern}(\alpha) \| \text{Bern}(1-\beta)) \leq \rho$ and $D_\tau(\text{Bern}(1-\beta) \| \text{Bern}(\alpha)) \leq \rho$. Swapping $\alpha$ and $\beta$ in these inequalities yields:
    \begin{align*}
        &D_\tau(\text{Bern}(\beta) \| \text{Bern}(1-\alpha)) \leq \rho \\ \text{and} \quad
        &D_\tau(\text{Bern}(1-\alpha) \| \text{Bern}(\beta)) \leq \rho.
    \end{align*}
Since $D_\tau(\text{Bern}(x) \| \text{Bern}(y)) = D_\tau(\text{Bern}(1-x) \| \text{Bern}(1-y))$, the set of constraints is invariant under the transformation $(\alpha, \beta) \mapsto (\beta, \alpha)$.

\end{proof}

Note that the \new{lower} boundary of the RDP privacy region given by $f_{\tau, \rho}(\alpha) = \inf \{ \beta : (\alpha, \beta) \in R_{D_\tau}(\rho) \}$ defines a well-behaved trade-off function for all $\alpha \in [0, 1]$:
    \begin{enumerate}
        \item \textbf{Convexity:} $f_{\tau, \rho}(\alpha)$ is a convex function. 
        This follows directly from the convexity of the region $R_{D_\tau}(\rho)$ (the lower boundary of a convex set defined on a connected domain is convex).
        \item \textbf{Monotonicity:} $f_{\tau, \rho}(\alpha)$ is non-increasing. 
        Increasing the allowable Type I error $\alpha$ relaxes the constraints on the test, allowing for a strictly lower (or equal) Type II error $\beta$.
        \item \textbf{Symmetry:} The RDP region is symmetric about $\alpha=\beta$. 
        Consequently, $(\alpha,\beta))$ is in the boundary if and only if $(\beta, \alpha)$ is on the boundary implying that the trade-off function $f_{\tau, \rho}$ is symmetric. 
        \item \textbf{Feasibility of Random Guessing:} $f_{\tau, \rho}(\alpha) \leq 1-\alpha$. 
        The trivial \enquote{random guessing} test (rejecting $H_0$ with probability $\alpha$ independent of the data) yields $\beta = 1-\alpha$. 
        This corresponds to the case where the induced binary distributions are identical ($D_\tau=0 \leq \rho$). 
        Thus, the diagonal line $\beta = 1-\alpha$ lies strictly inside the region, and the lower boundary must lie below it.
    \end{enumerate}

%%%%%%%%%%%%%%%%
\section{Optimality of the Single-Order RDP Conversion}

\begin{figure}
    \centering
    \includegraphics[scale=0.28]{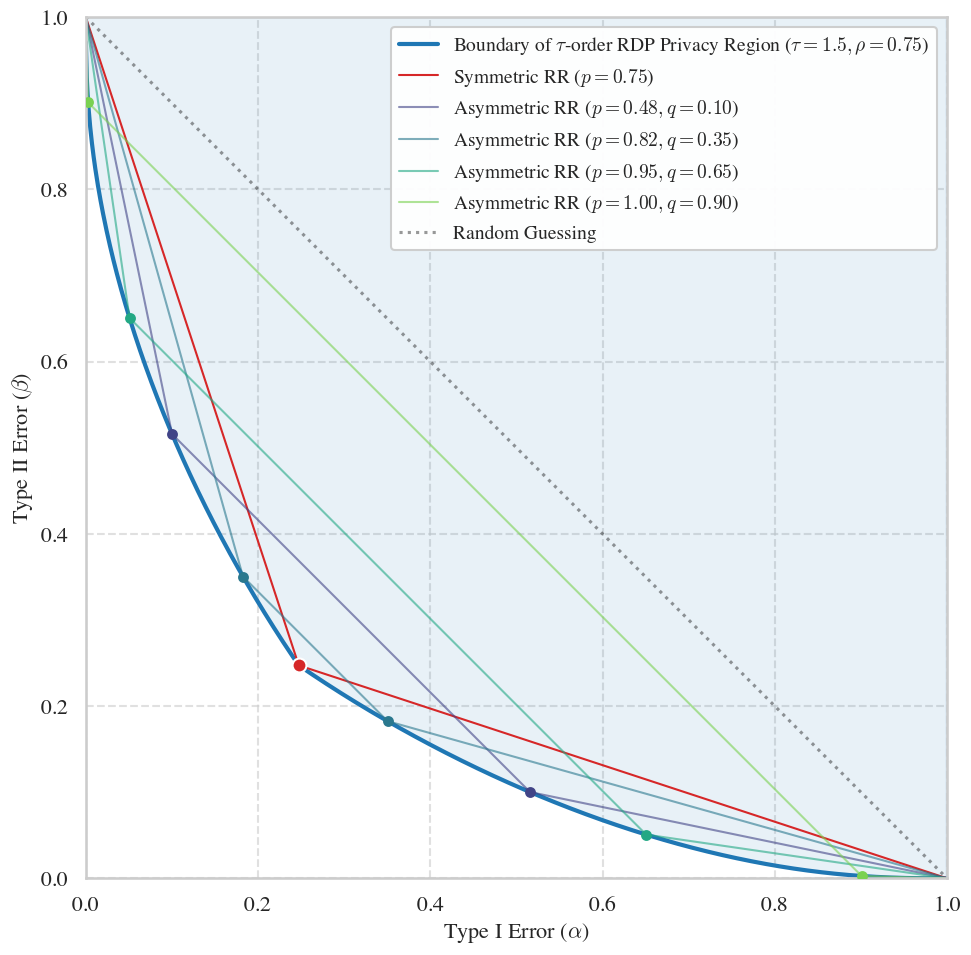}
    % \caption{Visualization of the $\tau$-order RDP privacy region $R_{D_\tau}(\rho)$ for $\tau=1.5$ and $\rho=0.75$. 
    % The bold blue curve depicts the lower boundary $f_{\tau, \rho}(\alpha)$. 
    % The straight lines represent the piecewise linear trade-off functions of specific RR mechanisms that satisfy the $(1.5, 0.75)$-RDP constraint. 
    % The red line corresponds to the symmetric RR mechanism, while the other lines correspond to asymmetric RR mechanisms. 
    % Note that the envelope of these valid mechanisms exactly reconstructs the RDP boundary, illustrating the optimality result in Proposition~\ref{prop::OptimalitysingleRDPBoundary}.}
    \caption{Visualization of the $\tau$-order RDP privacy region $R_{D_\tau}(\rho)$ for $\tau=1.5$ and $\rho=0.75$. 
    The bold blue line depicts the lower boundary $f_{\tau, \rho}(\alpha)$, and the shaded region corresponds to $R_{D_\tau}(\rho)$. 
    The piecewise linear trade-off functions correspond to specific Randomized Response mechanisms satisfying the $(1.5, 0.75)$-RDP constraint: the red line denotes the symmetric RR mechanism ($p=0.75$) and the remaining lines denote the asymmetric RR mechanisms with varying parameters $p$ and $q$. 
    % Each dot marks the point at which a mechanism's trade-off function touches the 
    % lower boundary of $R_{D_\tau}(\rho)$. 
    For every point on the lower boundary, there exists a Randomized Response mechanism whose trade-off function touches the boundary at exactly that point, establishing that the boundary cannot be tightened without excluding valid mechanisms (Proposition~\ref{prop::OptimalitysingleRDPBoundary}).}
    \label{fig::singleorderrdpprivacyregion}
\end{figure}

We now establish that the boundary of the single-order RDP privacy region is not merely a valid bound, but the \textit{optimal} one. 
No alternative conversion rule can extract a tighter trade-off function from a single $(\tau, \rho)$-RDP guarantee without excluding valid mechanisms.

\begin{proposition}[Optimality of the RDP Boundary] \label{prop::OptimalitysingleRDPBoundary}
    Let $f_{\tau, \rho}$ be the trade-off function defined by the lower boundary of the RDP privacy region $R_{D_\tau}(\rho)$.
    For any admissible conversion rule $C$ mapping a single RDP guarantee $(\tau,\rho)$ to a lower bound on trade-off functions, $f_{\tau, \rho}$ is the tightest possible bound. 
    That is, for any other valid lower bound $f'$, $f'(x) \leq f_{\tau, \rho}(x)$ for all $x \in [0,1]$.
\end{proposition}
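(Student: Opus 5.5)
The plan is a witness argument. Fix $x\in[0,1]$ and let $f'$ be any valid lower bound, i.e.\ $f'(\alpha)\le \inf_{(P,Q)\in\mathcal{S}_{\tau,\rho}}T(P,Q)(\alpha)$ for all $\alpha$, where $\mathcal{S}_{\tau,\rho}$ is the class of pairs with $D_\tau(P\|Q)\le\rho$ and $D_\tau(Q\|P)\le\rho$ (the single-order profile). It suffices to exhibit, for every $x$, a pair $(P,Q)\in\mathcal{S}_{\tau,\rho}$ with $T(P,Q)(x)\le f_{\tau,\rho}(x)$. Then $f'(x)\le T(P,Q)(x)\le f_{\tau,\rho}(x)$. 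Conversely, $f_{\tau,\rho}$ is itself admissible: by the ($\subseteq$) direction of Proposition~\ref{prop::BernoulliCharacterization}, every error pair $(P(S),Q(S^c))$ of every $(P,Q)\in\mathcal{S}_{\tau,\rho}$ lies in $R_{D_\tau}(\rho)$. Extending to randomized tests uses convexity (Proposition~\ref{prop:convexity}), since randomized tests give convex combinations of deterministic error pairs. Hence $T(P,Q)(\alpha)\ge f_{\tau,\rho}(\alpha)$.

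For the witness, take $\beta^*=f_{\tau,\rho}(x)$. The first step is to show that the infimum defining $f_{\tau,\rho}(x)$ is attained, i.e.\ $(x,\beta^*)\in R_{D_\tau}(\rho)$. This holds because the region is closed: by Proposition~\ref{prop::BernoulliCharacterization} it is cut out by the inequalities \eqref{eq::renyidivergenceineqone}--\eqref{eq::renyidivergenceineqthree}, which involve functions that are lower semicontinuous on $[0,1]^2$ as extended-valued maps. Lower semicontinuity handles the boundary points where $\beta^{1-\tau}$ blows up when $\tau>1$; the value $+\infty$ simply violates the constraint. The region is also nonempty in every vertical slice, because $(x,1-x)$ belongs to it. We then set $P=\mathrm{Bern}(x)$ and $Q=\mathrm{Bern}(1-\beta^*)$, the (generally asymmetric) Randomized Response pair. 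By the ($\supseteq$) direction of Proposition~\ref{prop::BernoulliCharacterization}, this pair satisfies both Rényi constraints, so $(P,Q)\in\mathcal{S}_{\tau,\rho}$. The test $S=\{1\}$ has Type I error exactly $x$ and Type II error exactly $\beta^*$, so $T(P,Q)(x)\le\beta^*=f_{\tau,\rho}(x)$, which closes the argument. In fact $T(P,Q)(x)=\beta^*$, because $T(P,Q)\ge f_{\tau,\rho}$ by admissibility; this is the touching point shown in Figure~\ref{fig::singleorderrdpprivacyregion}.

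The main obstacle is the closedness and attainment step, especially at the edges $x\in\{0,1\}$ and in the cases $\tau\in[0.5,1)$. For $\tau<1$ the constraint is a $\ge$ inequality on $\alpha^\tau(1-\beta)^{1-\tau}+(1-\alpha)^\tau\beta^{1-\tau}$, which is continuous on the closed square, so the set is closed. For $\tau=1$ and $\tau>1$ we use lower semicontinuity of the Rényi divergence in its arguments (e.g.\ \citet{van_Erven_2014}). If attainment were problematic, we would fall back on an $\varepsilon$-argument: choose $\beta_n\downarrow f_{\tau,\rho}(x)$ with $(x,\beta_n)\in R_{D_\tau}(\rho)$, build the corresponding Bernoulli witnesses, deduce $f'(x)\le\beta_n$ for all $n$, and let $n\to\infty$. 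This avoids needing closedness at all.
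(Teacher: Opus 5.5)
Your proposal is correct and follows the paper's proof: the Randomized Response witness $P=\mathrm{Bern}(x)$, $Q=\mathrm{Bern}(1-f_{\tau,\rho}(x))$ with the test $S=\{1\}$ forces every admissible bound to satisfy $f'(x)\le f_{\tau,\rho}(x)$. Your additions tidy up points the paper glosses over without changing the route. These are the attainment of the infimum via closedness (or your limiting fallback), and the observation that only $T(P,Q)(x)\le\beta^*$ is needed, which makes the paper's Step~2 on the full piecewise-linear trade-off curve unnecessary for the inequality.
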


\begin{proof}
    % \old{The proof proceeds in three steps: (1) Constructing valid Bernoulli mechanisms that achieve every point on the boundary; (2) characterizing their trade-off functions; and (3) proving optimality by contradiction.}
    \new{The proof proceeds in three steps: (1) Identifying valid Bernoulli mechanisms that achieve every point on the boundary; (2) characterizing their trade-off functions; and (3) proving optimality by contradiction.}

    \paragraph{Step 1: Achievability by Bernoulli Mechanisms.}
    By the Bernoulli characterization of $R_{D_\tau}(\rho)$ (see \Cref{prop::BernoulliCharacterization}), every point 
    $(\alpha^*, \beta^*)$ on the lower boundary satisfies 
    \begin{equation*}
        D_\tau(\mathrm{Bern}(\alpha^*)\|\mathrm{Bern}(1-\beta^*)) \leq \rho \quad \text{and} \quad D_\tau(\mathrm{Bern}(1-\beta^*)\|\mathrm{Bern}(\alpha^*)) \leq \rho.
    \end{equation*}
    Setting $P = \mathrm{Bern}(\alpha^*)$ and $Q = \mathrm{Bern}(1-\beta^*)$ 
    therefore defines a valid $(\tau,\rho)$-RDP mechanism, which is an 
    instance of Randomized Response (see Appendix~\ref{sec::rr}). The test 
    $S = \{1\}$, which rejects $H_0$ upon observing output $1$, achieves 
    $P(S) = \alpha^*$ and $Q(S^c) = \beta^*$ exactly.

    \paragraph{Step 2: Validity of the Trade-off Functions.}
    For any Randomized Response mechanism with output distributions $P = \mathrm{Bern}(\alpha^*)$ and $Q = \mathrm{Bern}(1-\beta^*)$ corresponding to a point $(\alpha^*, \beta^*)$ on the boundary, the full trade-off function is piecewise linear, constructed by combining the two one-sided tests. 
    The test $S = \{1\}$ distinguishing $P$ from $Q$ achieves $(\alpha^*, \beta^*)$, while the reverse test distinguishing $Q$ from $P$ achieves the symmetric point $(\beta^*, \alpha^*)$. 
    The full trade-off function therefore consists of the line segments connecting $(0,1) \to (\alpha^*, \beta^*) \to (\beta^*, \alpha^*) \to (1,0)$, with intermediate points attainable via randomization between adjacent tests. 
    In the symmetric case $\alpha^* = \beta^*$, the middle segment collapses to a point and the trade-off reduces to the two segments $(0,1) \to (\alpha^*, \alpha^*) \to (1,0)$. 
    In both cases, since $R_{D_\tau}(\rho)$ is convex and symmetric and contains the vertices $(0,1)$ and $(1,0)$, the entire piecewise linear trade-off lies within the region, confirming it is pointwise greater than or equal to the boundary $f_{\tau, \rho}$.

    \paragraph{Step 3: Contradiction.}
    Suppose, for the sake of contradiction, that there exists an admissible conversion rule yielding a trade-off function $f_{\text{new}}$ that strictly improves upon the boundary. 
    That is, there exists some $x_0 \in [0,1]$ such that $f_{\text{new}}(x_0) > f_{\tau, \rho}(x_0)$.
    
    From Step 1, we know there exists a valid Randomized Response mechanism $\mech_{\text{RR}}$ whose optimal identity test yields exactly the error pair $(x_0, f_{\tau, \rho}(x_0))$. 
    The true trade-off value of $\mech_{\text{RR}}$ at Type I error $x_0$ is exactly $f_{\tau, \rho}(x_0)$.
    However, the conversion rule asserts that \textit{any} mechanism satisfying $(\tau, \rho)$-RDP must have a trade-off at least $f_{\text{new}}(x_0)$.
    Since $f_{\tau, \rho}(x_0) < f_{\text{new}}(x_0)$, the mechanism $\mech_{\text{RR}}$ violates the bound provided by the conversion rule.
    
    This contradicts the assumption that the conversion rule is admissible for the class of $(\tau, \rho)$-RDP mechanisms. 
    Therefore, $f_{\tau, \rho}$ is the tightest possible bound.
\end{proof}

%%%%%%%%%%%%%%%%%%%%%%%%%%%

\section{Validity: Intersection of Regions as a Universal Bound}

We now consider the constraint imposed by the entire RDP profile $\rho(\tau)$.

\begin{lemma}[Projection of an Intersection] \label{lemma::projectionofintersection}
    Let $\{G_\tau\}_\tau$ be a family of closed convex subsets of the unit square whose lower boundaries are given by functions $y = h_\tau(x)$.
    For fixed $x$, the minimal $y$ such that $(x,y) \in \bigcap_\tau G_\tau$ is given by:
    \begin{equation}
        y_{\min}(x) = \sup_\tau h_\tau(x).
    \end{equation}
\end{lemma}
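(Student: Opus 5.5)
The plan is to reduce the statement to a one-dimensional fact about intersections of closed intervals, applied to each vertical section. Fix $x \in [0,1]$ and, for each $\tau$, consider the section
\[
G_\tau^x := \{ y \in [0,1] : (x,y) \in G_\tau \}.
\]
The lower boundary is by definition $h_\tau(x) = \inf G_\tau^x$. The section of the intersection is the intersection of the sections: $\bigl(\bigcap_\tau G_\tau\bigr)^x = \bigcap_\tau G_\tau^x$. It therefore suffices to show that $\bigcap_\tau G_\tau^x$ is nonempty and that its minimum equals $\sup_\tau h_\tau(x)$.

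The first step is to describe each section. $G_\tau^x$ is the intersection of the convex set $G_\tau$ with the vertical line $\{x\}\times[0,1]$, and it is compact because $G_\tau$ is closed in the unit square. Hence $G_\tau^x$ is a closed interval $[h_\tau(x), u_\tau(x)]$, where $u_\tau(x) := \sup G_\tau^x$, and the lower endpoint is attained. An intersection of closed intervals is again a closed interval, namely $[\sup_\tau h_\tau(x), \inf_\tau u_\tau(x)]$. This interval is nonempty exactly when $\sup_\tau h_\tau(x) \le \inf_\tau u_\tau(x)$, and in that case its minimum is $\sup_\tau h_\tau(x)$, attained by closedness.

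The argument then has two directions. The inequality $y \ge \sup_\tau h_\tau(x)$ for every $y$ in the intersection is immediate, since such a $y$ lies in each $G_\tau^x$ and so $y \ge h_\tau(x)$ for every $\tau$. For the reverse direction we must show that $y^\ast := \sup_\tau h_\tau(x)$ actually belongs to every section, which by the interval structure amounts to $y^\ast \le u_\tau(x)$ for all $\tau$.

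This nonemptiness condition is the main obstacle: for an arbitrary family of convex sets the intersection could be empty at some $x$, and then the claim fails. I would handle it by using the structure of the sets to which the lemma is applied, namely the RDP privacy regions $G_\tau = R_{D_\tau}(\rho(\tau))$. By the random-guessing property following Proposition~\ref{prop:convexity}, the point $(x, 1-x)$ lies in every $G_\tau$, since it corresponds to identical Bernoulli distributions with divergence $0$. Consequently $h_\tau(x) \le 1-x \le u_\tau(x)$ for all $\tau$, so $\sup_\tau h_\tau(x) \le 1-x \le \inf_\tau u_\tau(x)$, and the intersection is nonempty. In the general statement I would make this explicit as a hypothesis: there is a common point of all $G_\tau^x$, or, more weakly, $\sup_\tau h_\tau(x) \le \inf_\tau u_\tau(x)$. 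Under that hypothesis the two directions above combine to give $y_{\min}(x) = \sup_\tau h_\tau(x)$.
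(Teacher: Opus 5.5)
Your proof is correct and has the same skeleton as the paper's: fix $x$, pass to vertical sections, and use the fact that an intersection of intervals has left endpoint $\sup_\tau h_\tau(x)$.

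The difference is in what the sections are taken to be.
\begin{itemize}
\item The paper asserts that $(x,y)\in G_\tau$ if and only if $y\ge h_\tau(x)$. Every section is then $[h_\tau(x),1]$, and nonemptiness of the intersection is automatic.
\item You use only closedness and convexity, so you get sections $[h_\tau(x),u_\tau(x)]$. You correctly observe that the lemma then needs a common point, and you supply it through the random-guessing point $(x,1-x)$.
\end{itemize}

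Your version is the more careful one for the intended application. The regions given by the Bernoulli characterization are not upward closed. They are invariant under $(\alpha,\beta)\mapsto(1-\alpha,1-\beta)$. For example, $(1/2,1)$ violates the $\tau>1$ constraint for every finite $\rho$, because $\mathrm{Bern}(0)$ puts no mass on the outcome $1$. So the paper's identification of the sections with $[h_\tau(x),1]$ holds only under the upward-closed reading ``$P(S)\le\alpha$, $Q(S^c)\le\beta$'' of the region's definition.

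Your argument works under either reading. It also shows directly that $(x,\sup_\tau h_\tau(x))$ lies in every Bernoulli-characterized region, since $h_\tau(x)\le\sup_\tau h_\tau(x)\le 1-x\le u_\tau(x)$ for each $\tau$. That membership is exactly what the main theorem's witness construction requires.
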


\begin{proof}
    Fix $x \in [0,1]$. The feasible set of $y$ values for the intersection is defined as:
    \begin{equation}
        Y = \bigcap_\tau \{y \in [0,1] : (x,y) \in G_\tau\}.
    \end{equation}
    Since each set $G_\tau$ is defined by a lower boundary $h_\tau$, the condition $(x,y) \in G_\tau$ holds if and only if $y \geq h_\tau(x)$.
    Consequently, for each $\tau$, the feasible set is the interval $[h_\tau(x), 1]$.
    The intersection of such intervals is given by:
    \begin{equation}
        \bigcap_\tau [h_\tau(x), 1] = \left[ \sup_\tau h_\tau(x), \ 1 \right].
    \end{equation}
    Therefore, the minimal $y$ in the intersection is exactly $\sup_\tau h_\tau(x)$.
\end{proof}

Applying this to the privacy context where $G_\tau = R_{D_\tau}(\rho(\tau))$, this lemma confirms that the optimal trade-off function $f_\rho(\alpha)$ is the pointwise supremum of the single-order lower bounds:

\begin{figure}
    \centering
    \includegraphics[scale=0.28]{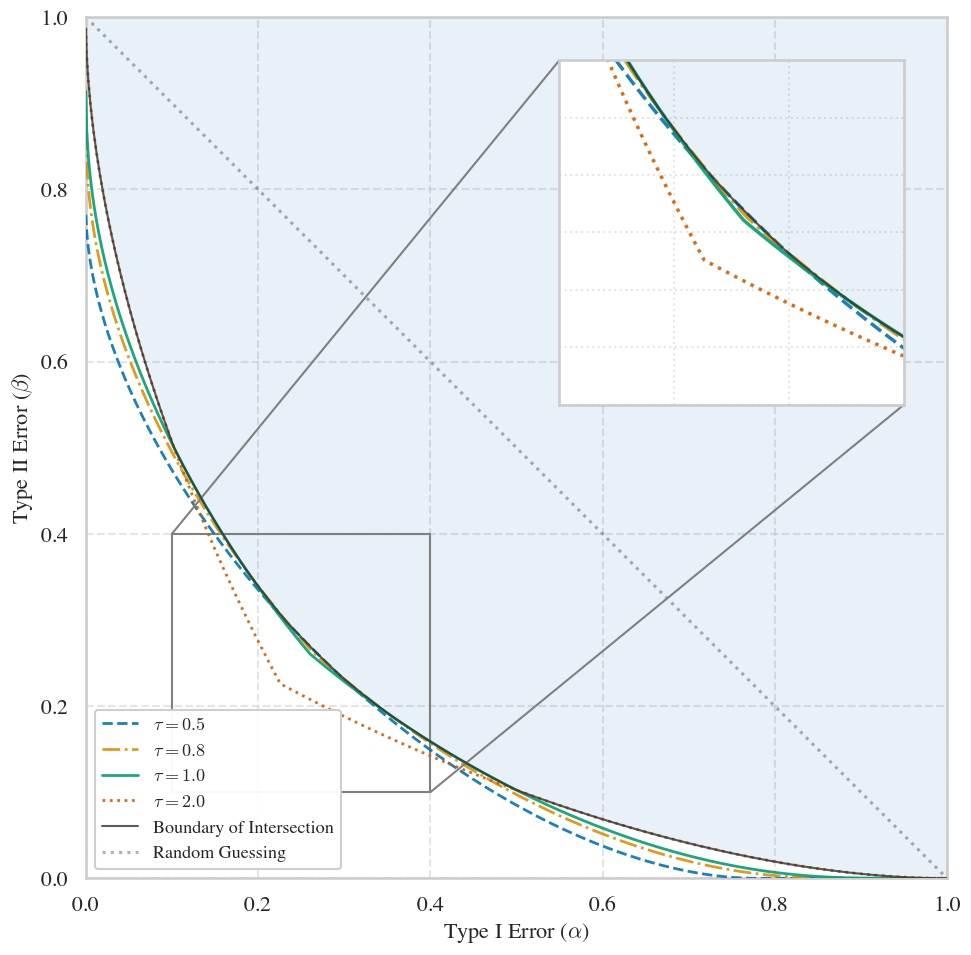}
    \caption{Exemplary construction of the joint RDP privacy region $\mathcal{R}_{\text{joint}}$ using a subset of orders $\tau \in \{0.5, 0.8, 1.0, 2.0\}$ for the profile of the Gaussian mechanism with unit sensitivity and noise scale $\sigma=1$, $\rho(\tau) = \frac{\tau}{2\sigma^2}$. 
    The colored lines depict the lower boundaries of the single-order regions $R_{D_\tau}$. 
    The solid black curve represents the boundary of the intersection, which corresponds to the pointwise maximum (supremum) of the individual single-order boundaries. 
    The inset zooms in on the \enquote{tangent} behavior, showing how different orders become active (provide the tightest bound) at different error regimes.}
    \label{fig:intersection_bound}
\end{figure}

% \begin{lemma}[Intersection Bound] \label{lemma::intersectionbound}
% \old{    If a mechanism satisfies the entire RDP profile $D_\tau \leq \rho(\tau)$ for all $\tau \geq 0.5$, then every attainable error pair $(\alpha, \beta)$ lies in the intersection region:
%     \begin{equation}
%         \mathcal{R}_{\text{joint}} = \bigcap_{\tau \geq 0.5} R_{D_\tau}(\rho(\tau)).
%     \end{equation}
%     Consequently:}
%     \begin{itemize}
%         \item \textbf{Trade-off Function:} The induced trade-off function $f_{\rho(\cdot)}(\alpha)$ satisfies:
%         \begin{equation} \label{eq:supremum_envelope}
%             f_{\rho(\cdot)}(\alpha) \geq \sup_{\tau \geq 0.5} f_{\tau,\rho(\tau)}(\alpha).
%         \end{equation}
%         % \item \textbf{Privacy Profile:} The $(\varepsilon, \delta)$-privacy profile satisfies:
%         % \begin{equation}
%         %     \delta(\varepsilon) \leq \inf_{\tau \geq 0.5} \delta_{\tau,\rho(\tau)}(\varepsilon) =: \delta_{\rho(\cdot)}(\varepsilon),
%         % \end{equation}
%         % where $\delta_{\tau,\rho(\tau)}(\varepsilon)$ is the tightest $\delta$ implied by the single order $\tau$ (as derived in Lemma~\ref{lemma::rdpprivacyregioncontainsprivacyregion}).
%     \end{itemize}
% \end{lemma}

\new{\begin{corollary}[Intersection Bound] \label{lemma::intersectionbound}
    If a mechanism satisfies the entire RDP profile $D_\tau \leq \rho(\tau)$ for all $\tau \geq 0.5$, then every attainable error pair $(\alpha, \beta)$ lies in the intersection region:
    \begin{equation}
        \mathcal{R}_{\mathrm{joint}} = \bigcap_{\tau \geq 0.5} R_{D_\tau}(\rho(\tau)).
    \end{equation}
    Consequently, the upper envelope defined by $f_{\rho(\cdot)}(\alpha) := \sup_{\tau \geq 0.5} f_{\tau,\rho(\tau)}(\alpha)$ 
    constitutes a valid trade-off function for any mechanism satisfying the profile.  
\end{corollary}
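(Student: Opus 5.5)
The argument combines the data processing chain \eqref{eq::renyi_chain}, applied order by order, with Lemma~\ref{lemma::projectionofintersection}.

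\textbf{Step 1: every attainable pair lies in the intersection.} Fix adjacent $\mathcal{D}\sim\mathcal{D}'$ and set $P=\mech(\mathcal{D})$, $Q=\mech(\mathcal{D}')$. Let $(\alpha,\beta)=(P(S),Q(S^c))$ be the error pair of an arbitrary test $S$. Randomized tests are handled the same way: they correspond to a measurable $\phi:\mathcal{Y}\to[0,1]$, and one applies DPI to the Markov kernel $y\mapsto\mathrm{Bern}(\phi(y))$.

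For each fixed $\tau\ge 0.5$, the hypothesis gives $D_\tau(P\|Q)\le\rho(\tau)$ and $D_\tau(Q\|P)\le\rho(\tau)$. By \eqref{eq::renyi_chain} and its counterpart with the roles of $P$ and $Q$ swapped,
\[
D_\tau(\mathrm{Bern}(\alpha)\|\mathrm{Bern}(1-\beta))\le\rho(\tau)
\quad\text{and}\quad
D_\tau(\mathrm{Bern}(1-\beta)\|\mathrm{Bern}(\alpha))\le\rho(\tau).
\]
By Proposition~\ref{prop::BernoulliCharacterization}, this says exactly that $(\alpha,\beta)\in R_{D_\tau}(\rho(\tau))$. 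Since $\tau$ was arbitrary, $(\alpha,\beta)\in\mathcal{R}_{\mathrm{joint}}$.

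\textbf{Step 2: from the intersection to a trade-off lower bound.} Each $R_{D_\tau}(\rho(\tau))$ is convex by Proposition~\ref{prop:convexity}. It is also closed, because the defining Bernoulli divergences are lower semicontinuous in $(\alpha,\beta)$. Finally, it is upward closed in $\beta$ for fixed $\alpha$. This is the step needing the most care, since Lemma~\ref{lemma::projectionofintersection} silently assumes each section $\{y:(x,y)\in G_\tau\}$ is the full interval $[h_\tau(x),1]$. To verify upward closedness, fix $\alpha$. The section is a convex subset of $[0,1]$ containing $1-\alpha$, the random-guessing point with zero divergence. Any $\beta'\in[\beta,1]$ can then be obtained from $(\alpha,\beta)$ by mixing the rejection rule with the trivial test that never rejects on part of the mass, i.e.\ post-processing. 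DPI therefore keeps the point in the region, which covers both $\beta'\le 1-\alpha$ and $\beta'>1-\alpha$.

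With the sections being the intervals $[f_{\tau,\rho(\tau)}(\alpha),1]$, Lemma~\ref{lemma::projectionofintersection} shows that the minimal $\beta$ in $\mathcal{R}_{\mathrm{joint}}$ at level $\alpha$ is $\sup_{\tau\ge 0.5}f_{\tau,\rho(\tau)}(\alpha)=f_{\rho(\cdot)}(\alpha)$. Combined with Step 1, every test with $P(S)\le\alpha$ has error pair $(P(S),Q(S^c))\in\mathcal{R}_{\mathrm{joint}}$. Hence
\[
Q(S^c)\ \ge\ f_{\rho(\cdot)}(P(S))\ \ge\ f_{\rho(\cdot)}(\alpha),
\]
where the last inequality uses that each $f_{\tau,\rho(\tau)}$ is non-increasing, so their supremum is too. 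Taking the infimum over $S$ gives $T(P,Q)\ge f_{\rho(\cdot)}$.

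\textbf{Step 3: $f_{\rho(\cdot)}$ is a genuine trade-off function.} A pointwise supremum of convex, non-increasing functions is convex and non-increasing. Moreover, $0\le f_{\rho(\cdot)}(\alpha)\le 1-\alpha$, because every $f_{\tau,\rho(\tau)}$ satisfies the random-guessing bound. Therefore $f_{\rho(\cdot)}$ is a valid trade-off function, and the mechanism is $f_{\rho(\cdot)}$-DP.
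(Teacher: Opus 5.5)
Your argument follows the paper's proof. You place every attainable pair in each $R_{D_\tau}(\rho(\tau))$; you derive this from \eqref{eq::renyi_chain} and Proposition~\ref{prop::BernoulliCharacterization}, where the paper simply cites Theorem~18 of \citet{balle2019hypothesistestinginterpretationsrenyi}. You then pass to the pointwise supremum via Lemma~\ref{lemma::projectionofintersection}, and use that a supremum of convex, non-increasing functions is convex and non-increasing. Your handling of randomized tests, and of tests with $P(S)<\alpha$ via monotonicity, tightens the paper's sketch.

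\textbf{The false claim in Step~2.} One claim there is false, although nothing depends on it: the regions are not upward closed in $\beta$ past $1-\alpha$.
\begin{itemize}
\item \emph{Counterexample.} Use the explicit Bernoulli description you invoke, with $\tau=2$ and finite $\rho$. The point $(1/2,1/2)$ has zero divergence, but $(1/2,1)$ gives $D_2(\mathrm{Bern}(1/2)\|\mathrm{Bern}(0))=\infty$. Upward closedness holds only under the literal ``$P(S)\le\alpha$, $Q(S^c)\le\beta$'' wording of the definition, and that is not the set described in Proposition~\ref{prop::BernoulliCharacterization}.
\item \emph{Why the mixing argument fails.} Mixing $S$ with the never-reject test at weight $\lambda$ gives $(\lambda\alpha,\,1-\lambda(1-\beta))$, which moves $\alpha$. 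More generally, any channel on $\{0,1\}$ that keeps the first Bernoulli parameter at $\alpha$ acts on the parameter as $x\mapsto\alpha+s(x-\alpha)$ with $|s|\le 1$. It can therefore only produce $\beta'$ with $|1-\beta'-\alpha|\le|1-\beta-\alpha|$.
\end{itemize}

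\textbf{The fix.} Drop the claim; what you actually need you have already shown. Each section is a closed interval $[f_{\tau,\rho(\tau)}(\alpha),\,g_\tau(\alpha)]$ with $g_\tau(\alpha)\ge 1-\alpha$:
\begin{itemize}
\item it is closed by your lower-semicontinuity remark;
\item it is an interval by convexity;
\item it contains the random-guessing point $1-\alpha$.
\end{itemize}
Hence the intersection over $\tau$ is the nonempty interval $[\sup_\tau f_{\tau,\rho(\tau)}(\alpha),\,\inf_\tau g_\tau(\alpha)]$, and its lower endpoint is the claimed minimum.

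For the validity statement itself even less is needed. Any $(\alpha,\beta)\in\mathcal{R}_{\mathrm{joint}}$ satisfies $\beta\ge f_{\tau,\rho(\tau)}(\alpha)$ for every $\tau$, simply because $f_{\tau,\rho(\tau)}(\alpha)$ is defined as an infimum over that section.
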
}

\begin{proof}
    First, by Theorem 18 of \citet{balle2019hypothesistestinginterpretationsrenyi}, the assumption $D_\tau \leq \rho(\tau)$ implies that the mechanism's error region is contained in $R_{D_\tau}(\rho(\tau))$ for all $\tau$. 
    Since this inclusion holds for all $\tau \geq 0.5$, the attainable region must lie in the intersection $\bigcap_\tau R_{D_\tau}(\rho(\tau))$.
    
    For the first claim (trade-off function), we apply Lemma~\ref{lemma::projectionofintersection}. 
    Since the attainable $(\alpha, \beta)$ lies in the intersection, for any fixed Type I error $\alpha$, the minimal attainable Type II error must be at least the minimal $\beta$ of the intersection boundary. 
    Lemma~\ref{lemma::projectionofintersection} establishes that this boundary is the pointwise supremum of the individual boundaries $f_{\tau, \rho(\tau)}(\alpha)$.

    We further confirm that this supremum constitutes a valid trade-off function. 
    It is a standard result in convex analysis that the pointwise supremum of a family of convex functions is convex. 
    Geometrically, since each region $R_{D_\tau}$ is closed and convex, their intersection $\mathcal{R}_{\text{joint}}$ is necessarily a closed, convex set. 
    Thus, its lower boundary $f_\rho$ is guaranteed to be a valid, convex trade-off curve.

    For the second claim (privacy profile), we invoke the geometric containment. 
    For any fixed $\varepsilon \geq 0$, the constraint implies that the mechanism's $\delta(\varepsilon)$ must be consistent with the region $R_{D_\tau}(\rho(\tau))$.
    Since this inequality must hold simultaneously for every valid $\tau$, the tightest possible bound is the infimum over all $\tau$.
\end{proof}

\begin{remark}[Geometric Interpretation: The Tangent Constraint]
    The boundary of the global RDP privacy region $\mathcal{R}_{\text{joint}}$ is formed by the intersection of the infinite family of regions $\{R_{D_\tau}\}_{\tau \geq 0.5}$. 
    Consequently, the final boundary constitutes the upper envelope of the individual boundaries defined by $f_{\tau, \rho(\tau)}$.
    
    For any specific error profile (a point $(\alpha, \beta)$ on the final boundary), there exists a specific optimal order $\tau^*$ such that the boundary of the region $R_{D_{\tau^*}}$ is tangent to the envelope at that point. 
    Geometrically, this implies that the curve defined by the $\tau^*$-constraint and the final envelope curve touch at this point and share the exact same slope (derivative), without crossing.
    Mathematically, this signifies that while all constraints must be satisfied, the constraint corresponding to $\tau^*$ is the \enquote{active} one that locally determines the shape and gradient of the privacy boundary.
\end{remark}

%%%%%%%%%%%%%%%%%%%%

With the validity of the intersection bound established, we now have all the necessary components to prove our main result.

\begin{theorem}[Universal Optimality of the Intersection-Based Conversion]
    For every admissible conversion rule $C$ mapping RDP profiles $\rho$ to lower bounds on trade-off functions, and for every valid profile $\rho$, it holds that:
    \begin{equation}
        C(\rho)(\alpha) \leq f_{\rho(\cdot)}(\alpha) \quad \text{for all } \alpha \in [0,1].
    \end{equation}
    Equivalently, the intersection-based trade-off $f_{\rho(\cdot)}$ is the pointwise tightest possible black-box conversion from RDP to $f$-DP.
\end{theorem}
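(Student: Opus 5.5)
The plan is to reproduce the witness argument of Proposition~\ref{prop::OptimalitysingleRDPBoundary}, now for the joint region $\mathcal{R}_{\mathrm{joint}}$. Fix a valid profile $\rho$ and some $\alpha_0\in[0,1]$, and set $\beta_0 := f_{\rho(\cdot)}(\alpha_0)$. Suppose, for contradiction, that some admissible rule satisfies $C(\rho)(\alpha_0) > \beta_0$. It is then enough to find a pair $(P,Q)\in\mathcal{S}_\rho$ with $T(P,Q)(\alpha_0)\le \beta_0$, because this contradicts \eqref{eq:admissible_conversion}. As in the single-order case, I take the Bernoulli (Randomized Response) pair
\[
P=\mathrm{Bern}(\alpha_0),\qquad Q=\mathrm{Bern}(1-\beta_0).
\]
The test $S=\{1\}$ has Type I error $\alpha_0$ and Type II error $\beta_0$, so $T(P,Q)(\alpha_0)\le\beta_0$ directly from the definition of $T$.

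\textbf{Step 1: $(\alpha_0,\beta_0)$ is in $\mathcal{R}_{\mathrm{joint}}$.} By Lemma~\ref{lemma::projectionofintersection} the fibre of the intersection over $\alpha_0$ is $[\sup_\tau f_{\tau,\rho(\tau)}(\alpha_0),1]$. That lemma treats this interval as closed. To justify this, I will note that every $R_{D_\tau}(\rho(\tau))$ is closed, because the Bernoulli Rényi divergence is lower semicontinuous in $(\alpha,\beta)$ (it is a supremum of continuous functions, possibly taking the value $+\infty$). Therefore $\beta\ge f_{\tau,\rho(\tau)}(\alpha_0)$ for every $\tau$ gives $(\alpha_0,\beta)\in R_{D_\tau}(\rho(\tau))$ for every $\tau$, since each fibre is the upward-closed interval $[f_{\tau,\rho(\tau)}(\alpha_0),1]$. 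Upward closure holds because increasing $\beta$ toward $1-\alpha$ moves the point toward the diagonal, and the region is convex and contains the diagonal by the random-guessing property. Taking $\beta=\beta_0$ yields $(\alpha_0,\beta_0)\in\mathcal{R}_{\mathrm{joint}}$.

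\textbf{Step 2: the pair satisfies the whole profile at once.} This is the step that uses the new idea. Proposition~\ref{prop::BernoulliCharacterization} applied to each $\tau\ge 0.5$ shows that $(\alpha_0,\beta_0)\in R_{D_\tau}(\rho(\tau))$ is the same as
\[
D_\tau(\mathrm{Bern}(\alpha_0)\|\mathrm{Bern}(1-\beta_0))\le\rho(\tau)
\quad\text{and}\quad
D_\tau(\mathrm{Bern}(1-\beta_0)\|\mathrm{Bern}(\alpha_0))\le\rho(\tau).
\]
The Bernoulli witness $(P,Q)$ depends only on the point $(\alpha_0,\beta_0)$ and not on $\tau$. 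So one pair satisfies every constraint together, and $(P,Q)\in\mathcal{S}_\rho$. The key point is that the extremal mechanisms for all orders are the same object. This is why an intersection of regions can be attained by one mechanism, rather than only bounding the attainable set from the outside.

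\textbf{Step 3: conclude.} Admissibility gives $C(\rho)(\alpha_0)\le T(P,Q)(\alpha_0)\le\beta_0=f_{\rho(\cdot)}(\alpha_0)$, which is the contradiction. Since $\alpha_0$ and $\rho$ were arbitrary, the bound holds uniformly. Corollary~\ref{lemma::intersectionbound} shows that $f_{\rho(\cdot)}$ is itself admissible, so it is the pointwise maximum among admissible rules, and this gives the ``equivalently'' statement.

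I expect the main obstacle to be Step 1 at the edges. These are points where $f_{\rho(\cdot)}(\alpha_0)$ is approached but possibly not attained, for example with $\rho(\tau)=\infty$ for some $\tau$, or near $\alpha_0\in\{0,1\}$ where Bernoulli divergences blow up. My first approach is the lower semicontinuity and closedness argument above. If attainment failed somewhere, I would fall back on approximation: take $\beta_n\downarrow\beta_0$ with $(\alpha_0,\beta_n)$ in every region, use the witnesses $\mathrm{Bern}(\alpha_0),\mathrm{Bern}(1-\beta_n)$, and get $C(\rho)(\alpha_0)\le\beta_n$ for every $n$, hence $C(\rho)(\alpha_0)\le\beta_0$.
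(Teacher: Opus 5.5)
Your proposal is correct and follows the paper's proof: the same Bernoulli (Randomized Response) witness $P=\mathrm{Bern}(\alpha_0)$, $Q=\mathrm{Bern}(1-\beta_0)$ is placed at the boundary point, it satisfies every order at once because that point lies in every $R_{D_\tau}(\rho(\tau))$, and this contradicts admissibility. Your Step 1 (closedness via lower semicontinuity, attainment of each infimum, convexity toward the anti-diagonal) makes rigorous what the paper only asserts. One small correction: under the Bernoulli characterization the fibre over $\alpha_0$ is generally not all of $[f_{\tau,\rho(\tau)}(\alpha_0),1]$ and stops short of $\beta=1$, but your convexity argument covers $\beta$ up to $1-\alpha_0$, which suffices because $\beta_0\le 1-\alpha_0$.
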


\begin{proof}
    We prove this by contradiction. Suppose there exists an admissible conversion rule $C$, a profile $\rho$, and a Type I error $\alpha_0 \in [0,1]$ such that the rule yields a strictly tighter bound than the intersection limit:
    \begin{equation}
        C(\rho)(\alpha_0) > f_{\rho(\cdot)}(\alpha_0).
    \end{equation}
    Let $\beta^* = f_{\rho(\cdot)}(\alpha_0)$. The point $P^* = (\alpha_0, \beta^*)$ lies exactly on the lower boundary of the intersection region $\mathcal{R}_{\text{joint}} = \bigcap_{\tau \geq 0.5} R_{D_\tau}(\rho(\tau))$.
    
    \textbf{Step 1: Constructing the Witness Mechanism.}
    We invoke the characterization of the boundary (see Step 1 of the proof of Proposition \ref{prop::OptimalitysingleRDPBoundary}). 
    For the point $P^* = (\alpha_0, \beta^*)$ on the boundary, we can construct a specific binary mechanism $\mech^*$ (an instance of Randomized Response) whose optimal hypothesis test yields the error pair exactly $(\alpha, \beta) = (\alpha_0, \beta^*)$.

    \textbf{Step 2: Verifying Validity.}
    We must establish that $\mech^*$ is a valid mechanism for the entire profile $\rho$, i.e., it satisfies $D_\tau(\mech^*) \leq \rho(\tau)$ for all $\tau \geq 0.5$.
    
    By construction, the point $P^*$ lies within the global intersection $\mathcal{R}_{\text{joint}}$, which implies $P^* \in R_{D_\tau}(\rho(\tau))$ for every individual $\tau$.
    Crucially, because $\mech^*$ is a \textit{binary} mechanism (mapping to $\{0,1\}$), the DPI for the 2-cut reduction holds with equality.
    That is, the R\'enyi divergence of the mechanism is exactly the R\'enyi divergence of the induced binary distributions defined by its error profile $P^*$:
    \begin{equation*}
        D_\tau(\mech^*(\mathcal{D}) \| \mech^*(\mathcal{D}')) = D_\tau(\text{Bern}(\alpha_0) \| \text{Bern}(1-\beta^*)).
    \end{equation*}
    Since $P^*$ is contained in every region $R_{D_\tau}(\rho(\tau))$, it satisfies the binary divergence constraints for all $\tau$. 
    Therefore, $D_\tau(\mech^*) \leq \rho(\tau)$ holds for all $\tau \geq 0.5$, proving that $\mech^*$ is a valid instance of the class defined by the profile $\rho$.

    \textbf{Step 3: Contradiction.}
    Since $C$ is defined as an admissible conversion rule, it must provide a valid lower bound for the trade-off function of \textit{any} mechanism satisfying the profile $\rho$, including our witness $\mech^*$.
    Evaluating the true trade-off of $\mech^*$ at $\alpha_0$, we have $\beta_{\mech^*}(\alpha_0) = \beta^*$.
    Admissibility requires:
    \begin{equation}
        C(\rho)(\alpha_0) \leq \beta_{\mech^*}(\alpha_0) = \beta^*.
    \end{equation}
    However, our initial hypothesis assumed $C(\rho)(\alpha_0) > \beta^*$. This is a contradiction.
    
    % Thus, no such strictly tighter bound exists, and $f_{\rho(\cdot)}$ is universally optimal.
    Thus, no such strictly tighter bound exists: $f_{\rho(\cdot)}$ is universally optimal and Blackwell dominates any other trade-off function obtainable by black-box conversion from an RDP profile.
\end{proof}

\new{
\begin{remark}[Optimal Conversion for zero-Concentrated Differential Privacy]
\label{rem:zcdp}
Our main theorem directly implies an optimal black-box conversion from 
zero-Concentrated Differential Privacy (zCDP) \citep{cesar2021bounding} 
to $f$-DP.
A mechanism satisfying $\gamma$-zCDP admits the parametric RDP profile 
$\rho(\tau) = \tau\gamma$ for all $\tau \geq 1$ \citep{cesar2021bounding}.
We remark that zCDP is precisely the accounting tool of choice for private selection pipelines combining Gaussian noise and the exponential mechanism \citep{cesar2021bounding}.
\end{remark}
}

\subsection{Exactness of the RDP Profile for Randomized Response}

\begin{figure}
    \centering
    \includegraphics[scale=0.28]{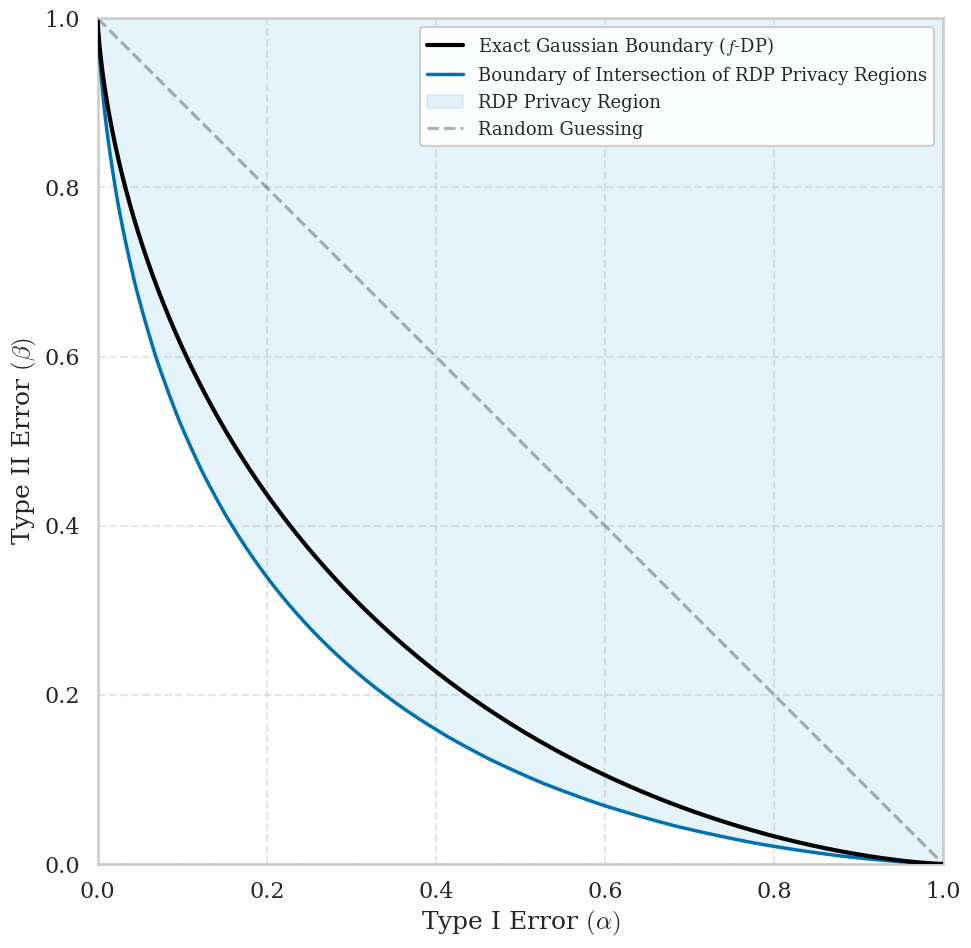}
    \caption{Optimal RDP-to-$f$-DP conversion for the Gaussian mechanism. 
    The blue shaded area represents $\mathcal{R}_{\text{joint}}$ for the profile of a Gaussian mechanism with unit sensitivity and noise scale $\sigma=1$, defined by $\rho(\tau) = \frac{\tau}{2\sigma^2}$. 
    The blue solid line indicates the optimal black-box trade-off function obtained by the intersection of these regions across all orders $\tau \in [0.5, \infty)$.
    The black solid line indicates the true trade-off for the Gaussian mechanism $f(\alpha) = \Phi(\Phi^{-1}(1-\alpha) - 1/\sigma)$.
    Although this conversion is optimal for any mechanism described solely by its RDP profile, it is not exact for all such mechanisms. }
    \label{fig:rdp_to_fdp_gaussian}
\end{figure}

As an interesting auxiliary finding, we show that the entire trade-off curve of the Symmetric Randomized Response mechanism is \textit{exactly} recovered by our conversion.

\begin{proposition}[Exact Recovery of Randomized Response]
    Let $\mech_{RR}$ be the Symmetric Randomized Response mechanism with parameter $p > 0.5$, which satisfies pure $\varepsilon$-differential privacy with $\varepsilon = \ln \frac{p}{1-p}$. 
    Let $\rho(\tau)$ be its exact RDP profile for all $\tau \geq 0.5$.
    
    The joint RDP privacy region $\mathcal{R}_{\text{joint}} = \bigcap_{\tau \geq 0.5} R_{D_\tau}(\rho(\tau))$ coincides exactly with the true attainable privacy region of $\mech_{RR}$.
\end{proposition}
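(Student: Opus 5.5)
Since the joint region is an intersection of regions each containing the attainable region, one inclusion is immediate. The real content is the reverse inclusion, which should come from the $\tau\to\infty$ order alone.

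\textbf{Easy inclusion.} The attainable region of $\mech_{RR}$ is contained in $\mathcal{R}_{\text{joint}}$. This is exactly Corollary~\ref{lemma::intersectionbound} applied to the mechanism's own exact profile. The true region of symmetric RR is the $(\varepsilon,0)$-DP region $R_{\text{DP}}(\varepsilon,0)$, with $\varepsilon=\ln\frac{p}{1-p}$. Its lower boundary is $f_{\varepsilon,0}(\alpha)=\max\{0,1-e^\varepsilon\alpha,e^{-\varepsilon}(1-\alpha)\}$, which is attained by the test $S=\{1\}$ at the kink $(1-p,1-p)$ together with randomization.

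\textbf{Reverse inclusion: reduction.} It suffices to show $\mathcal{R}_{\text{joint}}\subseteq R_{\text{DP}}(\varepsilon,0)$. Since $\rho(\tau)=D_\tau(\mathrm{Bern}(p)\|\mathrm{Bern}(1-p))$, we have $\rho(\tau)\to D_\infty=\ln\frac{p}{1-p}=\varepsilon$ as $\tau\to\infty$, monotonically from below. Take $(\alpha,\beta)\in\mathcal{R}_{\text{joint}}$. For every $\tau>1$, inequality \eqref{eq::renyidivergenceineqone} gives
\[
\alpha^\tau(1-\beta)^{1-\tau}+(1-\alpha)^\tau\beta^{1-\tau}\le e^{(\tau-1)\rho(\tau)}\le e^{(\tau-1)\varepsilon}.
\]
Dropping the first (nonnegative) term yields $(1-\alpha)^\tau\beta^{1-\tau}\le e^{(\tau-1)\varepsilon}$, that is,
\[
\Bigl(\tfrac{1-\alpha}{\beta}\Bigr)^{\tau}\le \tfrac{e^{(\tau-1)\varepsilon}}{\beta}.
\]
Taking $\tau$-th roots and letting $\tau\to\infty$ gives $1-\alpha\le e^\varepsilon\beta$. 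The case $\beta=0$ forces $\alpha=1$ by finiteness of the divergence, so it is consistent.

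\textbf{Remaining constraint and assembly.} The symmetric constraint, from the second inequality or from Proposition~\ref{prop:convexity}'s symmetry, gives $1-\beta\le e^\varepsilon\alpha$ in the same way. These two linear inequalities with $\delta=0$ are exactly the defining inequalities of $R_{\text{DP}}(\varepsilon,0)$. Hence the joint region coincides with the true region. In fact, the single order $\tau=\infty$ (in the limit) is already the active constraint everywhere.

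\textbf{Main obstacle.} The delicate step is the limit argument, and specifically the justification that $\rho(\tau)\le\varepsilon$ for all $\tau$, i.e.\ monotonicity of Rényi divergence in the order toward $D_\infty$. The paper has stated monotonicity, so I would cite it together with the standard identity $D_\infty(P\|Q)=\ln\sup p/q$. I would also check the edge cases $\alpha,\beta\in\{0,1\}$ separately. Finally, I would verify that $\mathcal{R}_{\text{joint}}$ is taken over $\tau\ge0.5$ and so includes arbitrarily large $\tau$; no $\tau<1$ orders are needed.
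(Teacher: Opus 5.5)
Your proof is correct and is essentially the paper's argument (the attainable region lies in $\mathcal{R}_{\text{joint}}$ by the intersection bound, and letting $\tau\to\infty$ in the order-$\tau$ constraints recovers $1-\alpha\le e^{\varepsilon}\beta$ and $1-\beta\le e^{\varepsilon}\alpha$), and you execute it more carefully: bounding $\rho(\tau)\le\varepsilon$ and dropping a nonnegative term gives a valid inequality at every finite $\tau$ before passing to the limit, which avoids the paper's case assumption on which term dominates and its informal treatment of $\tau=\infty$ as a member of the family. One caveat you share with the paper: read literally through the two-sided Bernoulli characterization, $R_{D_\tau}$ excludes points such as $(1,1)$ that lie in the two-half-plane region $R_{\text{DP}}(\varepsilon,0)$, so to match the truly attainable region exactly you should also keep the other term in each constraint, which gives $\alpha\le e^{\varepsilon}(1-\beta)$ and $\beta\le e^{\varepsilon}(1-\alpha)$ with no extra work.
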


\begin{proof}
    The proof relies on the behavior of the R\'enyi divergence as $\tau \to \infty$.
    
    \textbf{1. The True Region:} 
    The mechanism $\mech_{RR}$ satisfies pure $(\varepsilon, 0)$-differential privacy. 
    Its attainable privacy region is the intersection of the two half-planes defined by the standard constraints:
    \begin{equation} \label{eq:pure_dp_constraints}
        1 - \alpha \leq e^\varepsilon \beta \quad \text{and} \quad 1 - \beta \leq e^\varepsilon \alpha.
    \end{equation}
    These constraints form the standard piecewise linear trade-off function of $\mech_{RR}$ .

    \textbf{2. The RDP Limit:} 
    For the Symmetric Randomized Response mechanism with parameter $p > 0.5$, the exact RDP profile is given by \eqref{eq::renyiprofilerr}.
    We examine the asymptotic behavior of the constraint boundary as $\tau \to \infty$. 
    Recall the constraint inequality:
    \begin{equation*}
        \alpha^\tau (1-\beta)^{1-\tau} + (1-\alpha)^\tau \beta^{1-\tau} \leq \exp((\tau-1)\rho(\tau)).
    \end{equation*}
    Substituting $\rho(\tau)$ from \eqref{eq::renyiprofilerr}, the RHS becomes precisely $p^\tau(1-p)^{1-\tau} + (1-p)^\tau p^{1-\tau}$.
    We take the logarithm of both sides and normalize by $\tau$ to analyze the dominant terms.
    For the RHS, as $\tau \to \infty$, the term $p^\tau(1-p)^{1-\tau}$ dominates because $p > 1-p$:
    \begin{align*}
        &\lim_{\tau \to \infty} \frac{1}{\tau} \ln \left( \text{RHS} \right)\\ 
        &= \lim_{\tau \to \infty} \frac{1}{\tau} \ln \left( p^\tau(1-p)^{1-\tau} \left( 1 + \left(\frac{1-p}{p}\right)^{2\tau-1} \right) \right) \\
        &= \ln p - \ln(1-p) = \ln \frac{p}{1-p} = \varepsilon.
    \end{align*}
    Similarly for the LHS, assuming without loss of generality that $\frac{1-\alpha}{\beta} > \frac{\alpha}{1-\beta}$, the second term dominates:
    \begin{align*}
        \lim_{\tau \to \infty} \frac{1}{\tau} \ln \left( \text{LHS} \right) = \ln(1-\alpha) - \ln \beta = \ln \frac{1-\alpha}{\beta}.
    \end{align*}
    Thus, the inequality converges to $\ln \frac{1-\alpha}{\beta} \leq \varepsilon \implies 1-\alpha \leq e^\varepsilon \beta$.
    Applying the same logic to the symmetric constraint yields $1-\beta \leq e^\varepsilon \alpha$.
    These recover exactly the linear boundaries of $R_{\text{DP}}(\varepsilon, 0)$.
 
    \textbf{3. The Intersection:}
    By Lemma \ref{lemma::intersectionbound}, the attainable region lies in the intersection $\mathcal{R}_{\text{joint}}$.
    Since the limit $\tau \to \infty$ is included in the family of constraints, $\mathcal{R}_{\text{joint}}$ must be contained within the linear region defined by the limit (Pure DP).
    Conversely, since $\mech_{RR}$ satisfies these RDP constraints for all finite $\tau$, the linear region is contained within every curved RDP region.

    Therefore, the intersection $\bigcap_{\tau \geq 0.5} R_{D_\tau}$ exactly recovers the linear region $R_{D_\infty}$. 
    This proves that the conversion rule that utilizes the intersection of RDP privacy regions is tight for RR.
\end{proof}

%%%%%%%%%%%%%%%%%%%%

\section{Conclusion}
\label{sec:conclusion}

In this work, we have established the fundamental limit of black-box conversions from R\'enyi DP to $f$-DP: we have demonstrated that the conversion rule based on the intersection of privacy regions constitutes the \enquote{End of the Road} for RDP-to-$f$-DP conversion research. 
Our analysis proves that the construction defined by the intersection of single-order regions, equivalent to taking the pointwise maximum across orders in $f$-DP space.
Specifically, we have shown that no admissible black-box conversion rule that takes only the RDP profile $\rho(\cdot)$ as input can yield a strictly tighter trade-off function. 
Consequently, the intersection bound captures all the information contained in the RDP profile, implying that any further improvement would require further information about mechanism parameters.

Furthermore, our analysis reveals the structural simplicity of the \enquote{worst-case} mechanisms that saturate this optimal bound. 
We identified that the boundary is defined by simple Bernoulli processes, with parameters determined by the specific order $\tau$ whose constraint is active at a given query point. 
This finding mirrors the folklore in pure Differential Privacy that Randomized Response is the least private mechanism for a fixed budget, effectively extending this intuition to the entire RDP spectrum. 
This structural insight confirms that the theoretical limit is not an abstract artifact, but a tangible boundary realized by concrete, \textit{elementary} mechanisms.

Practically, this result simplifies the implementation of optimal accounting. 
To compute the optimal $f$-DP curve from an RDP profile, one does not need solve complex variational problems; it suffices to compute the family of analytic, convex single-order curves and take their pointwise maximum. 
\old{We provide a numerically stable code implementation at [LINK REDACTED].}
\new{In practice, this is done over a dense finite grid of orders; we discuss the numerical strategy and its efficiency in Appendix~\ref{app:implementation}.
We provide a numerically stable code implementation at \href{URL}{https://github.com/Felipe-Gomez/Renyi-to-ROC}.}

As shown by \citet{sommer2018privacy}, mechanisms with countable support can be converted tightly between RDP and $f$-DP.
Our conversion is optimal for \textit{any} mechanism described solely by its RDP profile (black box), but it is important to note that it is not guaranteed to be tight for all such mechanisms.
For instance, the converted trade-off function for the Gaussian mechanism remains a loose lower bound compared to its analytical form (see Figure \ref{fig:rdp_to_fdp_gaussian}). 
We regard discovering mechanism classes for which the black-box conversion is near-optimal as a fruitful direction for future work.

In conclusion, by definitively closing the gap between the upper bounds derived from RDP constraints and the lower bounds realizable by concrete mechanisms, we have resolved the black-box conversion problem.

% \section*{Accessibility}

% Authors are kindly asked to make their submissions as accessible as possible
% for everyone including people with disabilities and sensory or neurological
% differences. Tips of how to achieve this and what to pay attention to will be
% provided on the conference website \url{http://icml.cc/}.

% \section*{Software and Data}

% If a paper is accepted, we strongly encourage the publication of software and
% data with the camera-ready version of the paper whenever appropriate. This can
% be done by including a URL in the camera-ready copy. However, \textbf{do not}
% include URLs that reveal your institution or identity in your submission for
% review. Instead, provide an anonymous URL or upload the material as
% ``Supplementary Material'' into the OpenReview reviewing system. Note that
% reviewers are not required to look at this material when writing their review.

% Acknowledgements should only appear in the accepted version.
% \section*{Acknowledgements}

% \textbf{Do not} include acknowledgements in the initial version of the paper
% submitted for blind review.

% If a paper is accepted, the final camera-ready version can (and usually should)
% include acknowledgements.  Such acknowledgements should be placed at the end of
% the section, in an unnumbered section that does not count towards the paper
% page limit. Typically, this will include thanks to reviewers who gave useful
% comments, to colleagues who contributed to the ideas, and to funding agencies
% and corporate sponsors that provided financial support.

\section*{Impact Statement}

% Authors are \textbf{required} to include a statement of the potential broader
% impact of their work, including its ethical aspects and future societal
% consequences. This statement should be in an unnumbered section at the end of
% the paper (co-located with Acknowledgements -- the two may appear in either
% order, but both must be before References), and does not count toward the paper
% page limit. In many cases, where the ethical impacts and expected societal
% implications are those that are well established when advancing the field of
% Machine Learning, substantial discussion is not required, and a simple
% statement such as the following will suffice:

Our work is purely theoretical and advances the understanding of privacy-preserving machine learning. 
We foresee no specific negative societal consequences as a result of our work.

% The above statement can be used verbatim in such cases, but we encourage
% authors to think about whether there is content which does warrant further
% discussion, as this statement will be apparent if the paper is later flagged
% for ethics review.

% In the unusual situation where you want a paper to appear in the
% references without citing it in the main text, use \nocite

\bibliography{main}
\bibliographystyle{apalike}

%%%%%%%%%%%%%%%%%%%%%%%%%%%%%%%%%%%%%%%%%%%%%%%%%%%%%%%%%%%%%%%%%%%%%%%%%%%%%%%
%%%%%%%%%%%%%%%%%%%%%%%%%%%%%%%%%%%%%%%%%%%%%%%%%%%%%%%%%%%%%%%%%%%%%%%%%%%%%%%
% APPENDIX
%%%%%%%%%%%%%%%%%%%%%%%%%%%%%%%%%%%%%%%%%%%%%%%%%%%%%%%%%%%%%%%%%%%%%%%%%%%%%%%
%%%%%%%%%%%%%%%%%%%%%%%%%%%%%%%%%%%%%%%%%%%%%%%%%%%%%%%%%%%%%%%%%%%%%%%%%%%%%%%
\newpage
\appendix
\onecolumn

\section{Derivation of the Region Constraints}
\label{sec:derivation_constraints}

We now formally derive the constraints \eqref{eq::renyidivergenceineqone}--\eqref{eq::renyidivergenceineqthree}.
Recall that the decision rule induces two Bernoulli distributions:
\begin{align*}
    \mathcal{B}_P &\sim \text{Bern}(\alpha) \implies \mathbb{P}_{\mathcal{B}_P}(1) = \alpha, \quad \mathbb{P}_{\mathcal{B}_P}(0) = 1 - \alpha, \\
    \mathcal{B}_Q &\sim \text{Bern}(1-\beta) \implies \mathbb{P}_{\mathcal{B}_Q}(1) = 1 - \beta, \quad \mathbb{P}_{\mathcal{B}_Q}(0) = \beta.
\end{align*}
The RDP condition $D_\tau(P \| Q) \leq \rho$ combined with the DPI implies $D_\tau(\mathcal{B}_P \| \mathcal{B}_Q) \leq \rho$.
By definition of the R\'enyi divergence:
\begin{equation}
    \label{eq::renyiforbernoulli}
    D_\tau(\mathcal{B}_P \| \mathcal{B}_Q) = \frac{1}{\tau-1} \log \left( \sum_{z \in \{0,1\}} \mathbb{P}_{\mathcal{B}_P}(z)^\tau \mathbb{P}_{\mathcal{B}_Q}(z)^{1-\tau} \right) \leq \rho.
\end{equation}

\noindent \textbf{Case 1: $\tau > 1$} \\
For $\tau > 1$, the factor $\frac{1}{\tau-1}$ is positive. We can exponentiate both sides without flipping the inequality:
\begin{align*}
    \sum_{z \in \{0,1\}} \mathbb{P}_{\mathcal{B}_P}(z)^\tau \mathbb{P}_{\mathcal{B}_Q}(z)^{1-\tau} &\leq e^{(\tau-1)\rho} \\
    \iff \mathbb{P}_{\mathcal{B}_P}(1)^\tau \mathbb{P}_{\mathcal{B}_Q}(1)^{1-\tau} + \mathbb{P}_{\mathcal{B}_P}(0)^\tau \mathbb{P}_{\mathcal{B}_Q}(0)^{1-\tau} &\leq e^{(\tau-1)\rho}.
\end{align*}
Substituting the Bernoulli parameters $\alpha$ and $\beta$:
\begin{equation*}
    \alpha^\tau (1-\beta)^{1-\tau} + (1-\alpha)^\tau \beta^{1-\tau} \leq e^{(\tau-1)\rho}.
\end{equation*}
The second inequality in \eqref{eq::renyidivergenceineqone} follows from the symmetric privacy constraint $D_\tau(Q \| P) \leq \rho$, which implies $D_\tau(\mathcal{B}_Q \| \mathcal{B}_P) \leq \rho$. This swaps the roles of the distributions:
\begin{equation*}
    (1-\beta)^\tau \alpha^{1-\tau} + \beta^\tau (1-\alpha)^{1-\tau} \leq e^{(\tau-1)\rho}.
\end{equation*}

\noindent \textbf{Case 2: $\tau = 1$ (KL-Divergence)} \\
By definition of the KL-divergence, \eqref{eq::renyiforbernoulli} becomes:
\begin{align*}
    \sum_{z\in \{0,1\}} \mathbb{P}_{\mathcal{B}_P}(z) \log \frac{\mathbb{P}_{\mathcal{B}_P}(z)}{\mathbb{P}_{\mathcal{B}_Q}(z)} &\leq \rho \\
    \iff \alpha \log \frac{\alpha}{1-\beta} + (1-\alpha) \log \frac{1-\alpha}{\beta} &\leq \rho.
\end{align*}
Similarly, the symmetric constraint $D_{\text{KL}}(\mathcal{B}_Q \| \mathcal{B}_P) \leq \rho$ yields:
\begin{align*}
    (1-\beta) \log \frac{1-\beta}{\alpha} + \beta \log \frac{\beta}{1-\alpha} &\leq \rho.
\end{align*}

\noindent \textbf{Case 3: $0 < \tau < 1$} \\
For $\tau < 1$, the term $\frac{1}{\tau-1}$ is negative. Isolating the log-sum reverses the inequality direction:
\begin{align*}
    \log \left( \sum_{z \in \{0,1\}} \mathbb{P}_{\mathcal{B}_P}(z)^\tau \mathbb{P}_{\mathcal{B}_Q}(z)^{1-\tau} \right) \geq (\tau-1)\rho.
\end{align*}
Exponentiating both sides yields:
\begin{align*}
    \sum_{z \in \{0,1\}} \mathbb{P}_{\mathcal{B}_P}(z)^\tau \mathbb{P}_{\mathcal{B}_Q}(z)^{1-\tau} &\geq e^{(\tau-1)\rho} \\
    \iff \mathbb{P}_{\mathcal{B}_P}(1)^\tau \mathbb{P}_{\mathcal{B}_Q}(1)^{1-\tau} + \mathbb{P}_{\mathcal{B}_P}(0)^\tau \mathbb{P}_{\mathcal{B}_Q}(0)^{1-\tau} &\geq e^{(\tau-1)\rho}.
\end{align*}
Substituting the Bernoulli parameters $\alpha$ and $\beta$:
\begin{equation*}
    \alpha^\tau (1-\beta)^{1-\tau} + (1-\alpha)^\tau \beta^{1-\tau} \geq e^{(\tau-1)\rho}.
\end{equation*}
As in Case 1, the symmetric constraint $D_\tau(\mathcal{B}_Q \| \mathcal{B}_P) \leq \rho$ provides the second inequality required for \eqref{eq::renyidivergenceineqthree}.

Finally, we justify the restriction to $\tau \geq 0.5$.
Let $0 < \tau < 0.5$ and $\tau' = 1-\tau$. 
Consider the left-hand side of the second inequality in \eqref{eq::renyidivergenceineqthree} (derived from $D_\tau(Q\|P)$):
\[
    L(\tau) := (1-\beta)^\tau \alpha^{1-\tau} + \beta^\tau (1-\alpha)^{1-\tau}.
\]
Substituting $\tau = 1-\tau'$, we observe:
\[
    L(1-\tau') = (1-\beta)^{1-\tau'} \alpha^{\tau'} + \beta^{1-\tau'} (1-\alpha)^{\tau'} = \alpha^{\tau'} (1-\beta)^{1-\tau'} + (1-\alpha)^{\tau'} \beta^{1-\tau'}.
\]
This is exactly the left-hand side of the first inequality for $\tau'$ (derived from $D_{\tau'}(P\|Q)$). Thus, the quantity being constrained is identical due to the symmetry of the divergence pair.

Next, we compare the lower bounds. Since $\rho > 0$ and $0 < \tau < 0.5$, we have $\tau - 1 < -\tau = \tau' - 1$.
The exponential function is strictly increasing, so:
\[
    e^{(\tau-1)\rho} < e^{(\tau'-1)\rho}.
\]
Recall that for $\tau, \tau' < 1$, the privacy region is defined by lower bounds. 
Since the bound for $\tau'$ is strictly larger (tighter) than the bound for $\tau$, the condition imposed by $\tau \in (0, 0.5)$ is effectively redundant whenever the constraint for $\tau' \in [0.5, 1)$ is satisfied.
Therefore, it suffices to consider $\tau \in [0.5, \infty)$.

\section{Randomized Response Mechanisms}
\label{sec::rr}

\begin{definition}[Symmetric Randomized Response]\label{def::RR_symmetric}
    A Symmetric Randomized Response (RR) mechanism $\mech_{\text{RR}}$ with retention parameter $p \in [0.5, 1]$ takes a binary input $x \in \{0, 1\}$ and produces a binary output $y \in \{0, 1\}$. The mechanism preserves the input with probability $p$ and flips it with probability $1-p$:
    \[
        \mathbb{P}(y=x \mid x) = p, \quad \mathbb{P}(y \neq x \mid x) = 1-p.
    \]
    The probability of reporting the truth is symmetric for both inputs.
    The resulting mechanism satisfies pure $(\varepsilon, 0)$-DP with:
    \[
        \varepsilon = \ln \left( \frac{p}{1-p} \right).
    \]
    The optimal hypothesis test for this mechanism yields symmetric Type I and Type II errors:
    \[
        \alpha = 1-p, \quad \beta = 1-p.
    \]
    The RDP profile of the Symmetric RR is given by
    \begin{equation} \label{eq::renyiprofilerr}
        \rho(\tau) = \frac{1}{\tau - 1} \ln \left( p^\tau (1-p)^{1-\tau} + (1-p)^\tau p^{1-\tau} \right).
    \end{equation}
\end{definition}

\begin{definition}[Asymmetric Randomized Response]\label{def::RR_asymmetric}
    The Asymmetric Randomized Response mechanism $\mech_{\text{RR}}$ is defined by its transition matrix. For a binary input $x \in \{0, 1\}$ and binary output $y \in \{0, 1\}$, the mechanism operates according to the conditional probabilities $\hat{p}$ and $\hat{q}$, defined as:
    \begin{equation}
        \begin{aligned}
            \mathbb{P}(y=1 \mid x=1) &= (1-p)(1-q) =: \hat{p}, \\
            \mathbb{P}(y=1 \mid x=0) &= p + (1-p)(1-q) =: \hat{q}.
        \end{aligned}
    \end{equation}
    where $p \in [0,1]$ is a mixing parameter and $q \in [0,1]$ is a noise parameter.
    
    The resulting mechanism satisfies pure $(\varepsilon, 0)$-DP, where the privacy budget $\varepsilon$ is determined by the likelihood ratios of these induced probabilities:
    \begin{equation}
        \varepsilon = \ln \max \left( \frac{\hat{q}}{\hat{p}}, \; \frac{1 - \hat{p}}{1 - \hat{q}} \right).
    \end{equation}
    Assuming $\hat{q} > \hat{p}$ (which implies the flip dominates), the optimal hypothesis test distinguishing $x=0$ from $x=1$ rejects the null when $y=0$. The resulting error profile is:
    \[
        \alpha = 1 - \hat{q} \quad (\text{Type I}), \qquad \beta = \hat{p} \quad (\text{Type II}).
    \]
\end{definition}

\begin{remark}[Mechanical Construction via Mixture Model]
    The probability distribution defined in Definition~\ref{def::RR_asymmetric} is not arbitrary; it arises naturally from a two-stage \enquote{mixture} process often used to interpret Randomized Response physically (e.g., in survey methodology \citep{dwork2014algorithmic} or particle processes). 
    Consider the following algorithm:
    \begin{enumerate}
        \item \textbf{Mixing Step:} With probability $p$, the mechanism flips the input (mapping $0 \to 1$ and $1 \to 0$) and halts.
        \item \textbf{Noise Step:} With probability $1-p$, the mechanism ignores the input entirely and outputs a random bit, reporting $0$ with probability $q$ and $1$ with probability $1-q$.
    \end{enumerate}
    This construction provides a structural explanation for the parameters: $p$ controls the correlation with the flipped input (the \enquote{deniability} strength), while $q$ controls the bias of the noise.
\end{remark}

\section{Numerical Implementation}
\label{app:implementation}
\new{
We provide a numerically stable implementation for computing the optimal $f$-DP curve from an RDP profile, available at \href{URL}{https://github.com/Felipe-Gomez/Renyi-to-ROC}.
To evaluate the optimal $f$-DP bound in practice, we compute the pointwise maximum of single-order trade-off curves $f_{\tau,\rho(\tau)}$ over a dense, finite grid of orders $\tau$.
Concretely, we use a logarithmically spaced grid over $\tau \in [0.5, 100]$.
The computation is highly efficient: evaluating the trade-off function at one thousand Type~I error values $\alpha$ takes on the order of a few milliseconds per order, so sweeping over one hundred orders requires only a few hundred milliseconds in total.
}

%%%%%%%%%%%%%%%%%%%%%%%%%%%%%%%%%%%%%%%%%%%%%%%%%%%%%%%%%%%%%%%%%%%%%%%%%%%%%%%
%%%%%%%%%%%%%%%%%%%%%%%%%%%%%%%%%%%%%%%%%%%%%%%%%%%%%%%%%%%%%%%%%%%%%%%%%%%%%%%

\end{document}